\newif\ifprintVersion   
\newif\ifprofessionalPrint 
\newif\iffancyTheorems  
\newif\ifboldNumberSets 
\newif\ifbachelorThesis 
\newcommand*{\printTitle}{}
\newcommand*{\printGermanTitle}{}
\newcommand*{\myTitle}[2]{\renewcommand*{\printTitle}{#1}\renewcommand*{\printGermanTitle}{#2}}
\newcommand*{\printTitleBold}{\textbf{\printTitle}}
\newcommand*{\printAuthor}{}
\newcommand*{\myName}[1]{\renewcommand*{\printAuthor}{#1}}
\newcommand*{\printProgram}{}
\newcommand*{\myProgram}[1]{\renewcommand*{\printProgram}{#1}}
\newcommand*{\printDateReceived}{}
\newcommand*{\dateOfHandingIn}[1]{\renewcommand*{\printDateReceived}{#1}}
\newcommand*{\printSubject}{}
\newcommand*{\mySubject}[1]{\renewcommand*{\printSubject}{#1}}
\newcommand*{\printKeywords}{}
\newcommand*{\myKeywords}[1]{\renewcommand*{\printKeywords}{#1}}
\newcommand*{\printNameOfSupervisor}{}
\newcommand*{\nameOfMySupervisor}[1]{\renewcommand*{\printNameOfSupervisor}{#1}}
\newcommand*{\printAdditionalExaminers}{}
\newcommand*{\additionalExaminers}[1]{\renewcommand*{\printAdditionalExaminers}{#1}}
\newlength{\extraborderlength}
\newcommand*{\extraBorder}[1]{\setlength{\extraborderlength}{#1}}
\newlength{\mybindingcorrection}
\newcommand*{\bindingCorrection}[1]{\setlength{\mybindingcorrection}{#1}} 
\newlength{\myparindent}
\newlength{\myparskip}
\definecolor{stroke1}{HTML}{2574A9} 
\colorlet{captionlabel}{black}
\colorlet{footerpagenr}{black}
\colorlet{footerchapter}{stroke1}
\colorlet{footerchaptername}{black}
\colorlet{footersection}{stroke1}
\colorlet{footersectionname}{black}
\colorlet{chapternumber}{stroke1}
\newlength{\mypaperwidth}
\newlength{\mypaperheight}
\newlength{\mybodywidth}
\newlength{\mybodyheight}
\newlength{\myoutermargin}
\newlength{\mytopmargin}
\newlength{\myinnermargin}
\newlength{\mybottommargin}
\newcommand{\goldenratio}{1.618}
\newlength{\myheadsep} 
\newlength{\myfootskip} 
\newlength{\mymargininnersep} 
\newlength{\mymarginoutersep} 
\newlength{\mymarginwidth} 
\newlength{\mymarginwidthwithinnersep} 
\normalfont\textsf{\textbf{\color{footerchapter}\chaptername\ \thechapter}}
    \Ifstr{\rightmark}{\leftmark}%
    {%
        \begin{minipage}[b]{\mymarginwidth}%
            \small\raggedright\normalfont\textsf{\textbf{\color{footersection}Chapter\ \thechapter}}%
        \end{minipage}%
    }%
    {%
        \begin{minipage}[b]{\mymarginwidth}%
            \small\raggedright\normalfont\textsf{\textbf{\color{footersection}Section\ \thesection}}%
        \end{minipage}%
    }%
\normalfont\color{footerpagenr}\textbf{\thepage}%
\normalfont\color{footerpagenr}\textbf{\thepage}%
\newlength{\mytmpa}
\newlength{\mytmpb}
\renewcommand*{\partlineswithprefixformat}[3]%
{%
    #2
    \thispagestyle{empty}
    \setlength{\mytmpa}{0.618\mypaperwidth}%
    \setlength{\mytmpb}{0.382\mypaperheight}%
    \ifprintVersion
        \ifprofessionalPrint
            \setlength{\mytmpa}{0.618\mypaperwidth + \mybindingcorrection + \extraborderlength}%
            \setlength{\mytmpb}{0.382\mypaperheight + \extraborderlength}%
        \fi
    \fi
    \begin{tikzpicture}[overlay, remember picture]%
        \node [inner sep = 0, outer sep = 0, anchor = north] at (current page.north west)%
        {%
            \begin{tikzpicture}[overlay, remember picture]%
            \draw[color = stroke1, line width = 0.7 mm] (\mytmpa, 0) -- (\mytmpa, -\mytmpb);%
            \end{tikzpicture}%
        };%
        \node (align) [align = right, below = \mytmpb - 2 ex, inner sep = 0, outer sep = 0, anchor = north west] at (current page.north west)%
        {%
            \hspace{\mytmpa}\hspace{0.5 em}\partname\ \thepart\\[1 ex]
            \color{stroke1}#3%
        };%
    \end{tikzpicture}%
}
\renewcommand*{\chapterlinesformat}[3]%
{%
    \Ifnumbered{#1}{\setbool{chapterHasANumber}{true}}{\setbool{chapterHasANumber}{false}}%
    \Ifstr{#2}{}{\setbool{chapterHasAStar}{true}}{\setbool{chapterHasAStar}{false}}%
    \ifboolexpr{bool{chapterHasANumber} and not bool{chapterHasAStar}}%
    {%
        \begin{tikzpicture}[overlay, remember picture]%
            \node [right = \myinnermargin, below = \mytopmargin, inner sep = 0, outer sep = 0, anchor = north west] (numbernode) at (current page.north west)%
            {%
                \hspace{\myinnermargin}%
                \sffamily\fontsize{60}{60}\selectfont%
                \color{chapternumber}%
                \thechapter%
            };%
            \node [inner sep = 0, outer sep = 0, anchor = north west] at (numbernode.south west)%
            {%
                \begin{tikzpicture}[overlay, remember picture]%
                    \draw[color = stroke1, line width = 0.7 mm] (\myinnermargin, -1 ex) -- (\paperwidth, -1 ex);%
                \end{tikzpicture}%
            };%
            \node (align) [text width = \textwidth - 2 cm, align = right, right = \myinnermargin + \mybodywidth, inner sep = 0, outer sep = 0, anchor = east] at (numbernode.west)%
            {%
                #3%
            };%
        \end{tikzpicture}%
    }%
    {%
        \begin{tikzpicture}[overlay, remember picture]%
            \node [right = \myinnermargin, below = \mytopmargin, inner sep = 0, outer sep = 0, anchor = north west] (numbernode) at (current page.north west)%
            {%
                \hspace{\myinnermargin}%
                \sffamily\fontsize{60}{60}\selectfont%
                \color{white}%
                \thechapter%
            };%
            \node [inner sep = 0, outer sep = 0, anchor = north west] at (numbernode.south west)%
            {%
                \begin{tikzpicture}[overlay, remember picture]%
                    \draw[color = stroke1, line width = 0.7 mm] (\myinnermargin, -1 ex) -- (\paperwidth, -1 ex);%
                \end{tikzpicture}%
            };%
            \node (align) [align = left, right = \myinnermargin, inner sep = 0, outer sep = 0, anchor = south west] at (numbernode.south west)%
            {%
                #3%
            };%
        \end{tikzpicture}%
    }%
}
\DeclareCiteCommand{\conline}[\mkbibbrackets]
{\usebibmacro{prenote}}
{\usebibmacro{citeindex}%
  \usebibmacro{citenum}}
{\multicitedelim}
{\usebibmacro{postnote}}       
\newcommand*{\colloquialDegreeName}{Master}
\newcommand*{\colloquialDegreeNameLowercase}{master}
\newcommand*{\degreeAbbreviation}{M.}
    \renewcommand*{\colloquialDegreeName}{Bachelor}
    \renewcommand*{\colloquialDegreeNameLowercase}{bachelor}
    \renewcommand*{\degreeAbbreviation}{B.}
    \def\IfEmptyTF#1%
\relax\detokenize{#1}\relax%
\NewDocumentCommand{\mathOrText}{m}
{%
    \ensuremath{#1}\xspace%
}
\let\originalleft\left
\let\originalright\right
\renewcommand{\left}{\mathopen{}\mathclose\bgroup\originalleft}
\renewcommand{\right}{\aftergroup\egroup\originalright}
    \DeclareRobustCommand{\bfseries}%
    {%
        \not@math@alphabet\bfseries\mathbf%
        \fontseries\bfdefault\selectfont%
        \boldmath%
    }
\crefname{ineq}{inequality}{inequalities}
\crefname{term}{term}{terms}
\crefname{section}{section}{sections}
\crefname{chapter}{chapter}{chapters}
\crefname{subsection}{subsection}{subsections}
\let\oldfootnote\footnote
\newlength{\spaceBeforeFootnote} 
\newlength{\spaceAfterFootnote}  
\RenewDocumentCommand{\footnote}{o o o m}%
{%
    \IfNoValueTF{#1}%
    {%
        \oldfootnote{#4}%
    }%
    {%
        \setlength{\spaceBeforeFootnote}{\IfEmptyTF{#1}{0}{#1} em}%
        \IfNoValueTF{#2}%
        {%
            \hspace*{\spaceBeforeFootnote}\oldfootnote{#4}%
        }%
        {%
            \setlength{\spaceAfterFootnote}{\IfEmptyTF{#2}{0}{#2} em}%
            \hspace*{\spaceBeforeFootnote}\IfNoValueTF{#3}{\oldfootnote{#4}}{\oldfootnote[#3]{#4}}\hspace*{\spaceAfterFootnote}%
        }%
    }%
}
    \declaretheoremstyle
    [
        spaceabove = \topsep,
        spacebelow = \topsep,
        headfont = \bfseries,
        headformat = \textcolor{stroke1}{$\blacktriangleright$} \NAME~\NUMBER \NOTE,
        notefont = \bfseries,
        notebraces = {(}{)},
        bodyfont = \normalfont,
        postheadspace = 0.5 em,
        qed = \textcolor{stroke1}{\bfseries$\blacktriangleleft$},
    ]
    {myTheoremStyle}
    \declaretheorem
    [
        style = myTheoremStyle,
        name = Conjecture,
        numberwithin = chapter,
    ]
    {conjecture}
    \declaretheorem
    [
        style = myTheoremStyle,
        name = Proposition,
        sharenumber = conjecture,
    ]
    {proposition}
    \declaretheorem
    [
        style = myTheoremStyle,
        name = Claim,
        sharenumber = conjecture,
    ]
    {claim}
    \declaretheorem
    [
        style = myTheoremStyle,
        name = Lemma,
        sharenumber = conjecture,
    ]
    {lemma}
    \declaretheorem
    [
        style = myTheoremStyle,
        name = Corollary,
        sharenumber = conjecture,
    ]
    {corollary}
    \declaretheorem
    [
        style = myTheoremStyle,
        name = Theorem,
        sharenumber = conjecture,
    ]
    {theorem}
    \declaretheorem
    [
        style = myTheoremStyle,
        name = Definition,
        sharenumber = conjecture,
    ]
    {definition}
    \declaretheorem
    [
        style = myTheoremStyle,
        name = Example,
        sharenumber = conjecture,
    ]
    {example}
    \declaretheorem
    [
        style = myTheoremStyle,
        name = Remark,
        sharenumber = conjecture,
    ]
    {remark}
    \theoremstyle{plain}
\NewDocumentCommand{\functionTemplate}{m m m m o}%
{%
    \IfNoValueTF{#5}%
    {%
        \mathOrText{#1\left#2{#4}\right#3}%
    }%
    {%
        \mathOrText{#1#5#2{#4}#5#3}%
    }%
}
\newcommand*{\leftBracketType}{(}
\newcommand*{\rightBracketType}{)}
\NewDocumentCommand{\createFunction}{m m o o}%
{%
    \renewcommand*{\leftBracketType}{\IfNoValueTF{#3}{(}{#3}}%
    \renewcommand*{\rightBracketType}{\IfNoValueTF{#4}{)}{#4}}%
    \NewDocumentCommand{#1}{o o}%
    {%
        \IfNoValueTF{##1}%
        {%
            \mathOrText{#2}%
        }%
        {%
            \functionTemplate{#2}{\leftBracketType}{\rightBracketType}{##1}[##2]%
        }%
    }%
}
\DeclareDocumentCommand{\probabilisticFunctionTemplate}{m m O{} o}
{%
    \functionTemplate{#1}%
    {\lbrack}%
    {\rbrack}%
    {#2\IfEmptyTF{#3}{}{\ \IfNoValueTF{#4}{\left}{#4}\vert\ \vphantom{#2}#3\IfNoValueTF{#4}{\right.}{}}}%
    [#4]%
}
    \newcommand*{\N}{\mathOrText{\mathbb{N}}}
    \newcommand*{\indicatorFunctionSymbol}{\mathbf{1}}
    \newcommand*{\N}{\mathOrText{\mathds{N}}}
    \newcommand*{\indicatorFunctionSymbol}{\mathds{1}}
\RenewDocumentCommand{\Pr}{m O{} o}%
{%
    \probabilisticFunctionTemplate{\mathrm{Pr}}{#1}[#2][#3]%
}
\NewDocumentCommand{\Ex}{m O{} o}%
{%
    \probabilisticFunctionTemplate{\mathrm{E}}{#1}[#2][#3]%
}
\NewDocumentCommand{\Var}{m O{} o}%
{%
    \probabilisticFunctionTemplate{\mathrm{Var}}{#1}[#2][#3]%
}
\DeclareDocumentCommand{\bigO}{m o}%
{%
    \functionTemplate{\mathrm{O}}{(}{)}{#1}[#2]%
}
\DeclareDocumentCommand{\smallO}{m o}%
{%
    \functionTemplate{\mathrm{o}}{(}{)}{#1}[#2]%
}
\DeclareDocumentCommand{\bigTheta}{m o}%
{%
    \functionTemplate{\upTheta}{(}{)}{#1}[#2]%
}
\DeclareDocumentCommand{\bigOmega}{m o}%
{%
    \functionTemplate{\upOmega}{(}{)}{#1}[#2]%
}
\DeclareDocumentCommand{\smallOmega}{m o}%
{%
    \functionTemplate{\upomega}{(}{)}{#1}[#2]%
}
\DeclareDocumentCommand{\bigOlog}{m o}
{
    \functionTemplate{\Tilde{\mathrm{O}}}{(}{)}{#1}[#2]
}
\DeclareDocumentCommand{\bigOmegalog}{m o}
{
    \functionTemplate{\Tilde{\upOmega}}{(}{)}{#1}[#2]
}
\DeclareDocumentCommand{\eulerE}{o}%
{%
    \mathOrText{\mathrm{e}\IfNoValueTF{#1}{}{^{#1}}}%
}
\DeclareDocumentCommand{\poly}{m o}%
{%
    \functionTemplate{\mathrm{poly}}{(}{)}{#1}[#2]%
}
\createFunction{\id}{\mathrm{id}}
\NewDocumentCommand{\ind}{m o o}%
{%
    \IfNoValueTF{#2}%
    {%
        \mathOrText{\indicatorFunctionSymbol_{#1}}%
    }%
    {%
        \functionTemplate{\indicatorFunctionSymbol_{#1}}{(}{)}{#2}[#3]%
    }%
}
\DeclareDocumentCommand{\dom}{m o}%
{%
    \functionTemplate{\mathrm{dom}}{(}{)}{#1}[#2]%
}
\DeclareDocumentCommand{\rng}{m o}%
{%
    \functionTemplate{\mathrm{rng}}{(}{)}{#1}[#2]%
}
\DeclareDocumentCommand{\d}{o}%
{%
    \mathrm{d}\IfNoValueTF{#1}{}{^{#1}}%
}
\DeclareDocumentCommand{\set}{m m o}%
{
    \mathOrText{\IfNoValueTF{#3}{\left}{#3}\{#1\ \IfNoValueTF{#3}{\left}{#3}\vert\
    \vphantom{#1}#2\IfNoValueTF{#3}{\right.}{}\IfNoValueTF{#3}{\right}{#3}\}}
}      
\newcommand*{\tempG}{\mathOrText{\mathcal{G}}}
\newcommand*{\tempP}{\mathOrText{\mathcal{P}}}
\newcommand*{\tempE}{\mathOrText{\mathcal{E}}}
\newcommand{\staticTrans}{\mathOrText{G_S}}
\newcommand{\resultTempG}{\mathOrText{\tempG^*}}
\newcommand*{\TDSS}{\mathOrText{\mathcal{S}}}
\begin{document}

    \frontmatter

\ifprintVersion
    \ifprofessionalPrint
        \newgeometry
        {
            textwidth = 134 mm,
            textheight = 220 mm,
            top = 38 mm + \extraborderlength,
            inner = 38 mm + \mybindingcorrection + \extraborderlength,
        }
    \else
        \newgeometry
        {
            textwidth = 134 mm,
            textheight = 220 mm,
            top = 38 mm,
            inner = 38 mm + \mybindingcorrection,
        }
    \fi
\else
    \newgeometry
    {
        textwidth = 134 mm,
        textheight = 220 mm,
        top = 38 mm,
        inner = 38 mm,
    }
\fi

\begin{titlepage}
    \sffamily
    \begin{center}
        \includegraphics[height = 3.2 cm]{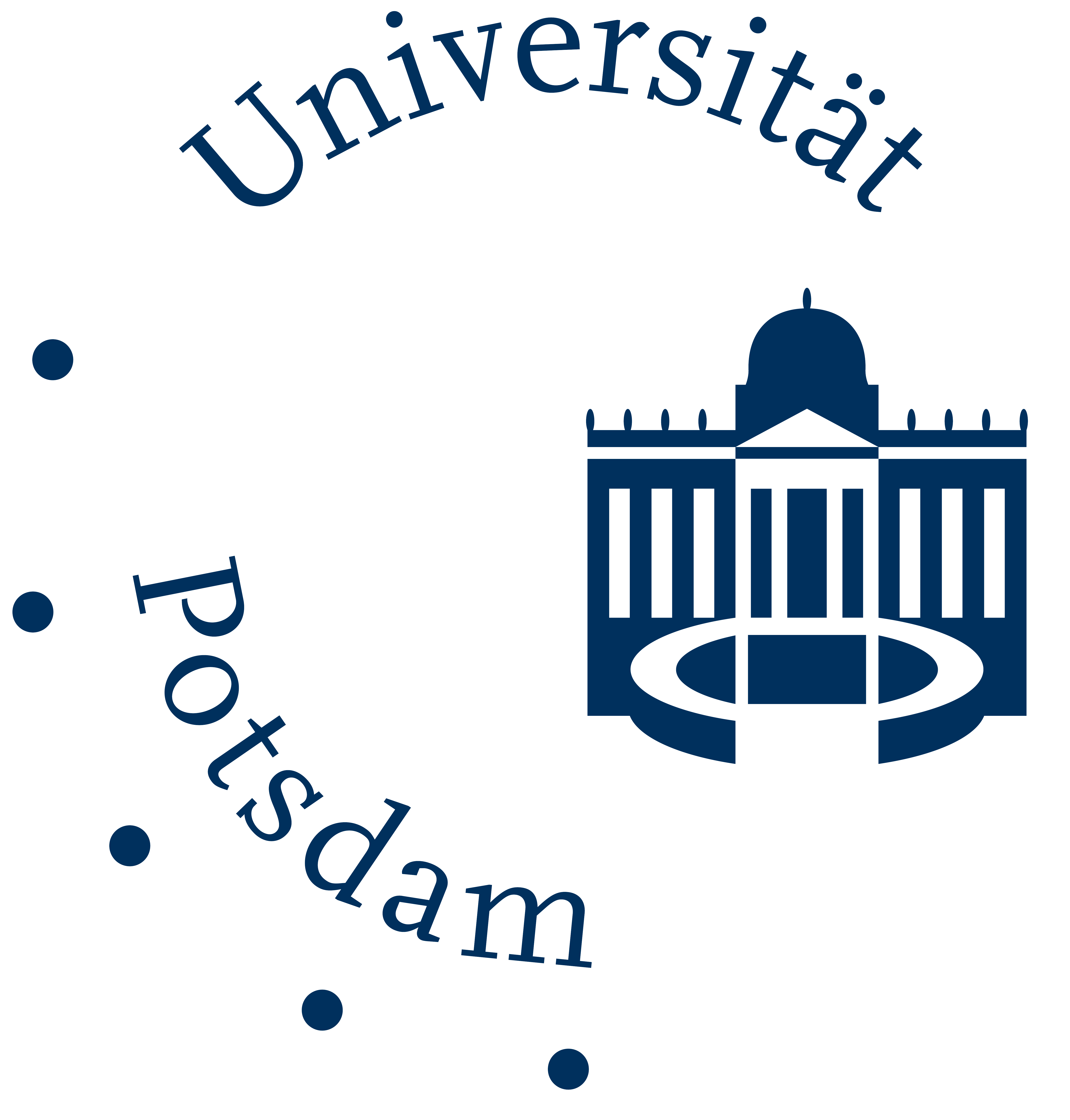} \hfill \includegraphics[height = 3 cm]{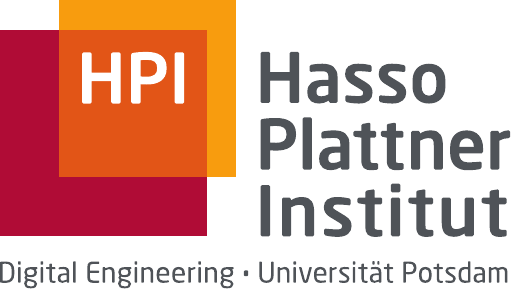}\\
        \vfil
        {\LARGE
            \rule[1 ex]{\textwidth}{1.5 pt}
            \onehalfspacing\printTitleBold\\[1 ex]
            {\vspace*{-1 ex}\Large \printGermanTitle}\\
            \rule[-1 ex]{\textwidth}{1.5 pt}
        }
        \vfil
        {\Large\textbf{\printAuthor}}
        \vfil
        {\large Universitäts\colloquialDegreeNameLowercase arbeit\\[0.25 ex]
        zur Erlangung des akademischen Grades}\\[0.25 ex]
        \bigskip
        {\Large \colloquialDegreeName{} of Science}\\[0.5 ex]
        {\large\emph{(\degreeAbbreviation\,Sc.)}}\\
        \bigskip
        {\large im Studiengang\\[0.25 ex]
        \printProgram}
        \vfil
        {\large eingereicht am \printDateReceived{} am\\[0.25 ex]
        Fachgebiet Algorithm Engineering der\\[0.25 ex]
        Digital-Engineering-Fakultät\\[0.25 ex]
        der Universität Potsdam}
    \end{center}
    
    \vfil
    \begin{table}[h]
        \centering
        \large
        \sffamily 
        {\def\arraystretch{1.2}
            \begin{tabular}{>{\bfseries}p{3.8 cm}p{5.3 cm}}
                Gutachter               & \printNameOfSupervisor\\
                Betreuer                & \printAdditionalExaminers
            \end{tabular}
        }
    \end{table}
\end{titlepage}

\restoregeometry

    \pagestyle{plain}

    \addchap{Abstract}
    Shortcut sets are a vital instrument for reducing the diameter of a static graph and, consequently, its shortest path complexity, which is relevant in numerous subfields of graph theory. We explore the notion of shortcut sets in temporal graphs, which incorporate a discrete time model into the graph, rendering each edge accessible exclusively at specific points in time. This not only alters the underlying assumptions of regular graphs but also substantially increases the complexity of path problems and reachability. In turn, a temporal graph is often a much more realistic and accurate representation of a real-world network. In this thesis we provide a definition for a shortcut set in a temporal graph and explore differences to classic shortcut sets. Utilizing this definition, we show that temporal and regular shortcut sets yield the same results on temporal paths, enabling the application of existing construction algorithms for static shortcut sets on paths. The primary contribution of this thesis is a translation approach for general temporal graphs that utilizes the static expansion of a temporal graph, allowing the conversion of static shortcut sets into temporal shortcut sets, yielding similar results.

    \selectlanguage{ngerman}
    \addchap{Zusammenfassung}
    Shortcut Sets sind ein wichtiges Instrument zur Verringerung des Durchmessers eines statischen Graphen und folglich der Komplexität von kürzesten Wegen, was in zahlreichen Teilgebieten der Graphentheorie von Bedeutung ist. Wir erforschen das Konzept von Shortcut Sets in temporalen Graphen, die ein diskretes Zeitmodell in den Graphen einbeziehen, so dass jede Kante nur zu bestimmten Zeitpunkten nutzbar ist. Dies verändert nicht nur die zugrundeliegenden Annahmen für reguläre Graphen, sondern erhöht auch die Komplexität von Pfadproblemen und der Erreichbarkeit erheblich. Im Gegenzug ist ein temporaler Graph oft eine viel realistischere und genauere Darstellung eines realen Netzwerks. In dieser Arbeit geben wir eine Definition für ein Shortcut Set in einem temporalen Graphen und untersuchen die Unterschiede zu klassischen Shortcut Sets. Anhand dieser Definition zeigen wir, dass temporale und reguläre Shortcut Sets die gleichen Ergebnisse auf temporalen Pfaden liefern, was die Anwendung bestehender Konstruktionsalgorithmen für statische Shortcut Sets auf Pfaden ermöglicht. Der primäre Beitrag dieser Arbeit ist ein Übersetzungsansatz für allgemeine temporale Graphen, der die statische Erweiterung eines temporalen Graphen nutzt und die Umwandlung von statischen Shortcut Sets in temporale Shortcut Sets ermöglicht, was zu vergleichbaren Ergebnissen führt.
    \selectlanguage{american}

    \addchap{Acknowledgments}
    First of all, I would like to thank my supervisors, Dr. George Skretas and Michelle Döring, for their invaluable support while working on this thesis. They have contributed important insights and support to this work, pushing me in the right direction to arrive at the results that I now present in this thesis.

Michelle, in particular, contributed to this thesis with her continuous and rigorous feedback throughout the writing process, encouraging me to continually improve the final product. She was also always available to support me and answer even the simplest questions that came up.

I particularly appreciated George's detached approach, allowing me to develop my own ideas and then stepping in with useful insights, questions, and critiques, while also being available for more in-depth discussion when needed. This provided a great balance of creative freedom and insightful support.

I would also like to thank the other people who took the time to review my work and provide valuable feedback. These include Vincent Quantmeyer, Ben Bals, and Hendrik Higl.

    \setuptoc{toc}{totoc}
    \tableofcontents

    \pagestyle{headings}
    \mainmatter

    \chapter{Introduction}
    \label{chapter:introduction}
The complexity of shortest path problems plays a crucial role in various computational models, often being heavily influenced by the diameter of the underlying graph. Additionally, the runtime of distributed or parallel algorithms often depends on the diameter. One approach to reducing the diameter of a graph is through the use of a diameter shortcut set, which expands a given graph with additional edges, called shortcuts. The goal is to reduce the diameter of a graph as much as possible, while adding as few shortcuts as possible.

A shortcut set $S$ is a set of edges selected from the transitive closure of a given graph $G=(V,E)$ such that the modified graph $(V, E\cup S)$ has a reduced diameter. The transitive closure $TC(G)=(V, E^*)$ of a static graph is defined as the graph where $(u,v)\in E^*$ if and only if $v$ is reachable from $u$. The concept of shortcut sets was first introduced by \textcite{thorupShortcuttingDigraphs1992} and has been extensively studied in the context of static graphs over the past decades.

In this thesis, we explore the idea of diameter shortcut sets in the domain of temporal graphs\footnote{Throughout this work, we refer to standard (i.e., non-temporal) graphs as static graphs to distinguish them from temporal graphs.}. To the best of our knowledge, the concept of diameter shortcut sets has not yet been studied in the context of temporal graphs.

Temporal graphs extend static graphs by incorporating a discrete time model, representing scenarios where edges become available only at specific time steps. Temporal graphs can be used to model real-world phenomena, such as transportation networks, social networks, and communication systems. Formally, a temporal graph consists of a static graph $G = (V,E)$, where each edge is labeled with a finite set of numbers. These labels indicate the discrete time points at which an edge is active, which could correspond to real-world time units such as seconds, minutes, or days. Importantly, any static edge can be available at multiple time steps, with each availability referred to as a temporal edge. $G$ is also known as the footprint of the temporal graph.

In this thesis, we propose a definition for a diameter shortcut set in temporal graphs and present some general observations implied by the nature of temporal graphs. Our primary contribution is a translation approach that maps a static shortcut set onto a temporal graph via a static expansion. This transformation allows us to leverage existing results on shortcut sets in static graphs to derive corresponding results for temporal graphs.
\section*{Temporal Graphs}\label{section:introTemporal}
We give a brief introduction into the field of temporal graphs and aim to give an overview of basic properties and observations for temporal graphs.
\begin{figure}
    \centering
    \includegraphics[width=0.7\textwidth]{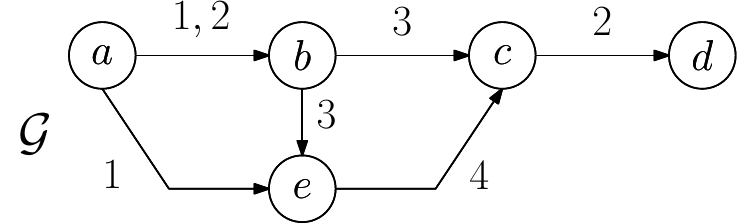}
    \caption{Example of a temporal graph}
    \label{fig:exampleTemporalGraph}
\end{figure}

Temporal graphs can be modeled in multiple different ways. All of these approaches are equivalent and are used interchangably. In some cases a specific model may be easier to work with than the others.
\begin{enumerate}
    \item The simplest solution is to use a labeling function $t: E\to 2^\N$ that assigns each edge a (possibly empty) set of time labels. We can now refer to our temporal graph based on $t$ as $t(G)$. This approach makes analyzing the labeling itself the easiest. For example using $t$ we can easily define the minimum label in a graph $t_{min}$ or the maximum label $t_{max}$. We could also refer to all labels of a single edge $e$ as $t(e)$. This information allows us to calculate the lifetime $\alpha(t) = t_{max}-t_{min}+1t$. Similar to the lifetime we can define the lifecycle $c(t) = [t_{min},t_{max}]$ as the range of labels.
    \item Another approach is to redefine the set of edges $E$ to $\tempE$. Let $\tempG=(V,\tempE)$ be a temporal graph. Then $e\in \tempE$ is a triple $(u,v,t)$ or sometimes even a quadruple $(u,v,t,\lambda)$. Here $u,v\in V$, $t\in\N$ is the starting time of the edge\footnote{Note that labels could also be chosen from other sets of numbers, though $\N$ is most common.} and $\lambda$ is the traversal time. If we do not want to regard traversal time we can either omit $\lambda$ or set $\lambda = 1$ for all edges $e\in \tempE$. Sometimes a temporal edge is also denoted as $(e,t)\in\tempE$, where $e$ is just a regular static edge.
    \newpage\item We can also interpret a set of static graphs as a temporal graph, where each static graph represents a time label $t$ and the edges active at that time. We call each of these graphs a snapshot $\tempG_t$ of the temporal graph $\tempG$.
\end{enumerate}

Finally, we can represent a temporal graph as an equivalent static graph, called a static expansion. Every node $v\in V$ expands into a set of nodes $V_v$ (one per time label). We connect a node $v_t\in V_v$ to the node representing the next time label $v_{t+1}\in V_v$. Now we can add an edge between two vertices of different sets $V_v$ and $V_u$, which then represents a temporal edge at the time of the starting vertex.

Static expansions have been used multiple times throughout literature, for example by \textcite{michailIntroductionTemporalGraphs2015} or \textcite{wuPathProblemsTemporal2014}. With this approach it is possible to use some static algorithms and solutions to compute solutions for the temporal graph. For example, \textcite{wuPathProblemsTemporal2014} used a static expansion in combination with algorithms like Dijkstra's Algorithm to compute shortest temporal paths. Depending on the problem, some variation on the exact definition of the static expansion may be necessary. In this thesis, we will use a variation of the static expansion to compute temporal diameter shortcut sets later.

\begin{figure}
    \centering
    \includegraphics[width=0.8\textwidth]{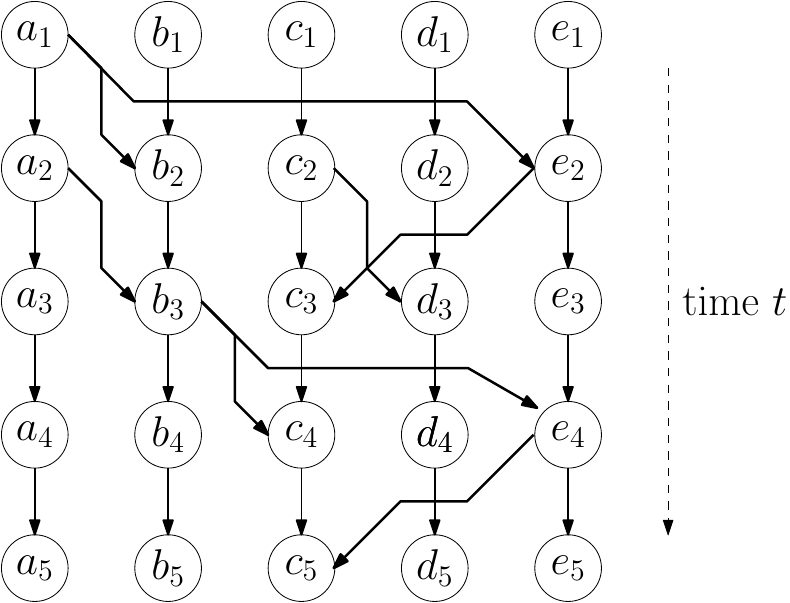}
    \caption{Static Expansion of the temporal graph $\tempG$ (see \Cref{fig:exampleTemporalGraph})}
    \label{fig:exampleClassicStaticExpansion}
\end{figure}

Note that with any of these models, temporal graphs can be both directed and undirected. An undirected temporal graph can easily be represented as a directed graph with every undirected edge replaced by two directed edges with the same time label, but not all problems (e.g., Eulerian trails) are equivalent under this reduction.
\subsection*{Temporal Reachability}\label{subsection:introTemporalReachability}
Temporal journeys introduce some more complexities compared to static journeys. In addition to a static path on the footprint of the temporal graph, the time labels need to increase along a journey or path. We differentiate between strict and non-strict temporal journeys. Strict temporal journeys require strictly increasing time labels, whereas non-strict temporal journeys allow increasing or equal time labels (see \Cref{fig:exampleStrictNonStrict}).
\begin{figure}
    \centering
    \includegraphics[width=0.8\textwidth]{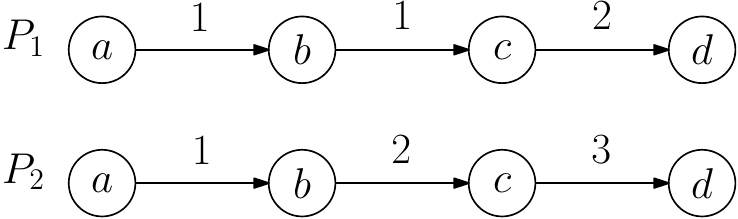}
    \caption{Example of a non-strict ($P_1$) and strict ($P_2$) Temporal Journey}
    \label{fig:exampleStrictNonStrict}
\end{figure}
Of course every strict temporal journey is also a non-strict temporal journey. This additional restriction makes a major difference for the reachability in a temporal graph.
\begin{restatable}[Transitivity of Reachability]{corollary}{transitivityOfReachability}
    \label{crl:transitivityOfReachability}
    Given a static graph $G=(V,E)$ and $a,b,c\in V$. If $a$ reaches $b$ and $b$ reaches $c$, then $a$ also reaches $c$. 
\end{restatable}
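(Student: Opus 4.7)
The plan is to unpack the definition of reachability and then glue the two witnessing paths together at the common endpoint $b$. By the standard notion of reachability in a static graph, the hypothesis ``$a$ reaches $b$'' means there exists a sequence of edges $(a, v_1), (v_1, v_2), \ldots, (v_{k-1}, b) \in E$ forming a path from $a$ to $b$, and similarly ``$b$ reaches $c$'' gives a path $(b, u_1), (u_1, u_2), \ldots, (u_{\ell-1}, c) \in E$.

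First I would concatenate these two edge sequences at $b$, producing a walk $a \to v_1 \to \cdots \to v_{k-1} \to b \to u_1 \to \cdots \to u_{\ell-1} \to c$ that uses only edges of $E$. This walk is a witness that $a$ reaches $c$, which completes the proof if reachability is taken to mean the existence of a walk (which is the usual convention, since walks and paths agree for the purposes of reachability in static graphs).

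If the definition in use insists on a simple path rather than a walk, I would finish by remarking that any $a$-$c$ walk in a static graph can be shortened to an $a$-$c$ path by repeatedly excising any cycle between two repeated vertices; this standard reduction preserves the endpoints and only removes edges, so the resulting sequence is still contained in $E$.

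I do not expect any real obstacle here: the statement is an immediate consequence of the fact that concatenation of paths in static graphs is well-defined, and the transitive closure $TC(G)$ referenced earlier in the introduction is precisely the relation obtained by closing $E$ under exactly this operation. The only subtlety worth flagging explicitly is the contrast with the temporal setting motivated in the surrounding text: the argument crucially uses that edges in a static graph have no time labels, so the concatenation never fails due to a timing conflict, which is precisely what can go wrong for strict temporal journeys in \Cref{fig:exampleStrictNonStrict}.
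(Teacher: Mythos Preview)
Your argument is correct and is exactly the standard concatenation-of-walks proof. The paper itself does not give a proof of \Cref{crl:transitivityOfReachability} at all; it simply states the corollary as a well-known fact and immediately moves on to contrast it with the temporal setting, so your write-up is already more detailed than what appears in the text.
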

\Cref{crl:transitivityOfReachability} holds in a static graph and is the basis of many algorithms for path problems like Dijkstra's algorithm. Due to the added time constraint, reachability in temporal graphs is not transitive (see \Cref{fig:transitivityCounterExample}). The lack of transitivity of temporal reachability will become relevant later in this thesis.
\begin{figure}
    \centering
    \includegraphics[width=0.5\textwidth]{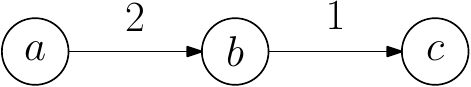}
    \caption{Transitivity of Reachability}
    \label{fig:transitivityCounterExample}
\end{figure}

When examining the problem of shortest journeys, there are multiple metrics to consider. Let us look at the real-world example of a train network. We can optimize the time the journey takes as a whole, the time spent traveling (and not waiting), the number of different trains taken, the arrival time or the departure time. For example, \textcite{wuPathProblemsTemporal2014} introduce earliest arrival paths, latest departure paths, fastest paths and shortest paths. These correspond to optimizing arrival time, departure time, lifetime of the path and hop distance respectively. Depending on the metric, the method of finding an optimal path can change, although there are some metrics that relate. For example, the earliest arrival time given a fixed departure time is also the fastest time.

    \chapter{Related Work}
    \label{chapter:relatedWork}
We present related work categorized into two different sections. First, we go over the most recent and most relevant results for static shortcut sets, as our main contributions in this thesis are largely based on these results. This includes both computational results and approaches to the construction of static shortcut sets. Afterwards we explore constructions and concepts similar to shortcut sets in the context of temporal graphs.

\section{Static Shortcut Sets}
\label{section:staticRelatedWork}
A static shortcut set $S$ is a set of edges taken from the transitive closure of a graph $G=(V,E)$, so that the graph $(V,E\cup S)$ has reduced diameter. A shortcut set that achieves diameter $d$ is sometimes referred to as a $d$-shortcut set. The notion of a shortcut set was first introduced by \textcite{thorupShortcuttingDigraphs1992}. The term \emph{transitive closure spanner} or \emph{$d$-TC-spanner} was later introduced by \textcite{bhattacharyyaTransitiveClosureSpanners2008} and refers to the same concept. We will therefore treat them both as static shortcut sets here.

There are two settings that are addressed continuously throughout the literature for static shortcut sets. Given a directed unweighted graph $G=(V,E)$ with $|V| = n$ nodes and $|E|=m$, by how much can a shortcut set (1) of size $\bigO{n}$ or (2) of size $\bigO{m}$ reduce the diameter of $G$? Both settings preserve the sparsity of the graph, within a constant factor. 

The most recent results for upper bounds on the two scenarios are based on a randomized construction by \textcite{ullmanHighprobabilityParallelTransitive1990}, which was later improved by \textcite{koganNewDiameterReducingShortcuts2022}. \textcite{bermanFindingSparserDirected2010} also provides an approximation approach, contributing to the state of the art results. 

For lower bounds on static shortcut sets, we again address the same two scenarios as above. For $\bigO{m}$ shortcuts, the best results were published by \textcite{luBetterLowerBounds2022}. They improved on the constructions of \textcite{hesseDirectedGraphsRequiring2003} and \textcite{huangLowerBoundsSparse2018}. \textcite{bodwinFolkloreSamplingOptimal2024} provide the best lower bound for $\bigO{n}$ shortcuts added.
\newpage The two scenarios yield the following results:
\begin{enumerate}[noitemsep]
    \item A possible diameter of $\bigO{n^{1/3}}$ when adding $\bigOlog{n}$ shortcuts to the graph with a lower bound of $\bigOmegalog{n^{1/4}}$ and
    \item a possible diameter of $\bigO{(n^2/m)^{1/3}}$ when adding $\bigOlog{m}$ shortcuts with a lower bound of $\bigOmega{n^{1/5}}$.
\end{enumerate}

A separate problem to tackle is the construction of shortcut sets given some desired diameter $d$. As mentioned above, \textcite{ullmanHighprobabilityParallelTransitive1990} introduced a randomized approach to construct a shortcut set, which works as follows. Given a graph with $n$ vertices and a desired diameter $d$, we choose a set $S$ of $\bigO{n\log n/d}$ vertices and add the transitive closure of $S$ to our graph. This achieves a diameter of $\bigO{d}$ \cite{williamsSimplerHigherLower2023}.

\textcite{bermanFindingSparserDirected2010} also introduced an approximation approach resulting in two construction algorithms for small and large values of $d$. For small $d = \bigO{\log n/\log\log n}$ a transformation from $2$-shortcut sets to $d$-shortcut sets is used in combination with an $\bigO{\log n}$-approximation for $2$-shortcut sets by \textcite{elkinClientServer2SpannerProblem2001}. This resulted in an approximation ratio of $\bigO{n^{1-1/\lceil d/2\rceil}\log n}$. For large $d=\bigOmega{\log n/\log\log n}$ \textcite{bermanFindingSparserDirected2010} presented \Cref{fig:staticSpannerAlg}\footnote{\Cref{fig:staticSpannerAlg} uses the transitive reduction of a graph $G$, which is just the smallest possible subgraph on the same vertex set, that has the same reachabilities as $G$.}. This algorithm resulted in an approximation ratio of $\bigO{n/d^2}$. 

\begin{algorithm}
    \KwIn{directed graph $G=(V,E)$, desired diameter $d$}
    Let $H$ be a transitive reduction of $G$\;
    \While{$H$ contains vertices $u,v$ such that $dist_H(u,v)>d$}{
        Let $(v_1=u,v_2,...,v_t=v)$ be the shortest path from $u$ to $v$ in $H$\;
        Add a shortcut edge ($v_{\lfloor d/4\rfloor}, v_{t-\lfloor d/4\rfloor})$ to $H$\;
    }
    Output $H$.
    \caption{$\bigO{n/d^2}$-Approximation Algorithm for $d$-shortcut sets}
    \label{fig:staticSpannerAlg}
\end{algorithm}

\section{Related Concepts in Temporal Graphs}\label{section:temporalRelatedWork}
In this section we explore different concepts in temporal graphs that are either similar to shortcut sets or correlate to our results in some other way. We begin by looking at constructions that manipulate a temporal graph by adding or removing labels or edges to achieve a certain metric, like connectivity or distances. This is of course very similar to the concept of a diameter shortcut set.

A similar field to shortcut sets in temporal graphs is \emph{temporal graph realization}. This field has been studied, among others, in \cite{erlebachParameterizedAlgorithmsMultiLabel2024}, \cite{mertziosRealizingTemporalTransportation2024} and \cite{klobasRealizingTemporalGraphs2024}. Given a static graph $G$ or a distance matrix, one tries to create a temporal graph that realizes a given metric as well as possible. For example, \textcite{klobasRealizingTemporalGraphs2024} explored temporal graph realization based on fastest path distances. More specifically given an $n\times n$ matrix $D$ and a period\footnote{A periodic temporal graph assigns time labels in a periodic range, which is then repeated.} $\Delta\in\N$, we try to construct a $\Delta$-periodic temporal graph with $n$ vertices such that the duration of a fastest path between two vertices $v_i$ and $v_j$ is equal to $D_{i,j}$. 

Another similar topic is label manipulation. \textcite{deligkasMinimizingReachabilityTimes2021} for example explored the delay of connections in temporal graphs to improve the overall travel time between vertices. Given a temporal graph $\tempG$ and a set of source vertices, they ask how we can minimize the maximum time needed to reach every vertex in $\tempG$ by delaying some number of temporal edges. When allowed to use an unbounded number of delays, this can be computed in polynomial time given only one source vertex. In any other case the problem becomes $NP$-hard.

\textcite{klobasComplexityComputingOptimum2022} also explored the \emph{Minimum Labeling Problem}, which asks for the smallest number of time labels $|\lambda|$ for a graph $G$ such that the resulting temporal graph is temporally connected. A temporal graph is temporally connected if all vertices are reachable through some temporal path. The basic problem of minimum labeling can be optimally solved in polynomial time, although some variations of the problem, also explored by Klobas at al. are $NP$-hard.

All of the above mentioned work can be categorized as \emph{temporal design problems}. We either, given a temporal graph, try to modify its temporal edge set to achieve some metric or, given some dataset or static base graph, we try to create a temporal graph from it to again optimize some graph metric.

Lastly, \textcite{whitbeckTemporalReachabilityGraphs2012} explored the concept of temporal reachability graphs. A $(\tau,\delta)$-\emph{reachability graph} based on a temporal graph $\tempG=(V,\tempE)$, is a temporal graph where for $u,v\in V$ there exists a temporal edge in the reachability graph at time $t$ if there exists a temporal journey between $u$ and $v$ starting after $t$ with a traversal time of $\tau$ and a maximum delay of $\delta$. \textcite{whitbeckTemporalReachabilityGraphs2012} developed a theoretical framework around temporal reachability graphs and introduced algorithms for their computation.

A reachability graph in the static context is equivalent to a static transitive closure. We also explore possible definitions for temporal transitive closures in this thesis to use as a basis of a temporal diameter shortcut set.

    \chapter{Preliminaries}
    \label{chapter:preliminaries}
We will now formally introduce terms used throughout the rest of the thesis. We begin by defining the relevant terms for shortcut sets in static graphs.
\begin{restatable}[Static Transitive Closure]{definition}{staticTransitiveClosure}
    \label{def:staticTransitiveClosure}
    Given a static graph $G=(V,E)$ the \emph{transitive closure} of $G$, denoted as $TC(G)$, is the graph $(V,E^*)$ with $(u,v)\in E^*$ \emph{if and only if} $u$ is reachable from $v$ in $G$.
\end{restatable}
Using the static transitive closure we can now define the static shortcut set as follows:
\begin{restatable}[Static Shortcut Set]{definition}{staticShortcutSet}
    \label{def:staticShortcutSet}
    Given a static graph $G=(V,E)$ and its transitive closure $TC(G) = (V,E^*)$ a \emph{static shortcut set} $S\subseteq E^*$, is a set of edges, where the graph $(V,E\cup S)$ has reduced diameter. A shortcut set achieving a diameter $d$, is referred to as a $d$-shortcut set.
\end{restatable}
We call an edge $s\in S$ a \emph{shortcut} and may refer to adding a shortcut or shortcut set to a given graph as \emph{shortcutting}.

Additionally we introduce relevant terms for temporal graphs, beginning with a temporal graph itself.
\begin{restatable}[Temporal Graph]{definition}{temporalGraph}
    \label{def:temporalGraph}
    A \emph{temporal graph} $\tempG=(V,\tempE)$ is a graph, where $V$ is the set of vertices in $\tempG$ and $\tempE$ is the set of \emph{temporal edges} $e =(u,v,t)\in\tempE$. A temporal edge $e$ consists of the starting and ending vertices $u,v\in V$ as well as a \emph{starting time} $t\in\N$.
\end{restatable}
Note that we could also introduce a \emph{traversal time} $\lambda\in\N$ as part of a temporal edge. We assume a constant traversal time $\lambda =1$ for all edges and therefore omit $\lambda$ entirely.

Given $t_{min},t_{max}\in\N$ as the minimum and maximum label for any edge $e\in\tempE$, we define the \emph{lifetime} of a temporal graph $\tempG$ as $T(\tempG) = t_{max}-t_{min}+1$. Similarly, the \emph{lifecycle} refers to the range of time labels $[t_{min},t_{max}]$ in $\tempG$. We may refer to the underlying static graph $G=(V,E)$, where $E=\{(u,v)\mid (u,v,t)\in\tempE\}$, as the \emph{footprint} of $\tempG$. We may refer to a temporal edge $(u,v,t)$ as $(e,t)$ with $e = (u,v)$ being a static edge on the footprint of $\tempG$.

\newpage We define a temporal path, which is based on a simple static path, but adds some more temporal restrictions:
\begin{restatable}[Temporal Path]{definition}{temporalPath}
    \label{def:temporalPath}
    Given a temporal graph $\tempG=(V,\tempE)$, a \emph{temporal path} is a sequence of vertices $\tempP = \langle v_1,v_2,...,v_k,v_{k+1}\rangle$, where $(v_i,v_{i+1},t_i)\in\tempE$ and $t_i<t_{i+1}$ for all $1\leq i<k$.
\end{restatable}
For the purposes of this thesis we will work with \emph{strict temporal paths} (as defined in \Cref{def:temporalPath}), though we may address the non-strict case at some points. In a \emph{non-strict temporal path} edge labels can be equal or increase in topological order. Of course this implies that, any strict temporal path is also a non-strict temporal path. Note that a static path on the footprint of $\tempG$ may contain multiple temporal paths, as any static edge may contain multiple temporal edges.

In addition to the standard diameter of a graph or temporal graph we also define a \emph{base diameter} based on temporal graph $\tempG$ as follows:
\begin{restatable}[Base Diameter]{definition}{baseDiameter}
    \label{def:baseDiameter}
    Let $\tempG = (V,\tempE)$ be a temporal graph. For any temporal graph $\resultTempG$ with the same vertex set $V$, the \emph{base diameter} $d_\tempG(\resultTempG)$ refers to the longest distance from any vertex $u\in V$ to any vertex $v\in V$ in $\resultTempG$, where $v$ was also reachable from $u$ in $\tempG$.
\end{restatable}
Of course the base diameter $d_\tempG(\tempG)$ is just the standard diameter of $\tempG$.
    
    \chapter{Temporal Diameter Shortcut Sets}
    \label{chapter:tdssDefinition}
In this chapter we will discuss the construction of a Temporal Diameter Shortcut Set. At the end of the chapter we provide a definition for the concept, that we also use throughout the rest of the thesis.

Recall that a static shortcut set $S$ is a set of edges taken from the transitive closure of a graph $G=(V,E)$, so that $(V,E\cup S)$ has reduced diameter (see \Cref{def:staticShortcutSet}). The definition of the static shortcut set is based on the static transitive closure (\Cref{def:staticTransitiveClosure}). Among other things, this implies that no new reachabilities can be introduced to $G$ through the  addition of shortcuts from $S$. 

As we will discuss, due to the nature of temporal graphs, our definition will differ in some ways from that of static shortcut sets. More specifically our final definition will not be based on a temporal transitive closure and will not constrain the reachabilities added by the shortcuts. Nevertheless, we will explore the application of a transitive closure and similar concepts in temporal graphs to come to this conclusion. We will show that through the usage of any of these concepts new reachabilities will be added to the temporal graph.

\section{Temporal Transitive Closure}\label{section:temporalTransitiveClosure}
In this section we explore different possible definitions of a temporal transitive closure. For this we look at similar known concepts in static graphs.

In the static context a transitive closure preserves the reachability of its underlying graph. We share some concepts analog to the static transitive closure and conclude that all of them increase reachability in temporal graphs making them unfeasible as the basis of a temporal shortcut set.

Firstly, any temporal transitive closure will only use edges from the static transitive closure of the footprint. This is obviously implied by the restriction of reachability of a transitive closure. The question then becomes, which labels should belong in a temporal transitive closure. This has to be restricted somehow, so that the transitive closure is still finite.

\begin{enumerate}
    \item The simplest solution would be to allow any time label within the lifecycle of the temporal graph. This would allow shortcut sets to completely change different time metrics like earliest arrival time and latest departure time, which we would like to avoid.
    \item We could also choose the labels according to inherent time metrics of the graph like the earliest arrival time. More specifically, given a temporal graph $\tempG = (V,\tempE)$ the \emph{EAT Transitive Closure} would be a graph $\tempG^{TC} = (V,\tempE^{TC}\cup\tempE)$, where $(u,v,t)\in\tempE^{TC}$ \emph{if and only if} the earliest arrival time from $u$ to $v$ is $t$.
    \item Analog to 2. we could define a \emph{LDT Transitive Closure} $\tempG^{TC} = (V,\tempE^{TC}\cup\tempE)$ based on the latest departure time, where $(u,v,t)\in\tempE^{TC}$ \emph{if and only if} the latest departure time from $u$ to $v$ is $t$.
    \item Another option is to use a temporal equivalent to the definition of the static transitive closure. Recall that in the static transitive closure $TC(G)=(V,E^*)$ based on a static graph $G=(V,E)$  there exists an edge $(u,v)\in E^*$ if and only if $v$ is reachable from $u$ in $G$, i.e., there exists a path from $u$ to $v$. Of course the same definition could be used for a temporal graph instead requiring temporal paths from $u$ to $v$.
\end{enumerate}

\begin{figure}[h!]
    \centering
    \includegraphics[width=0.8\textwidth]{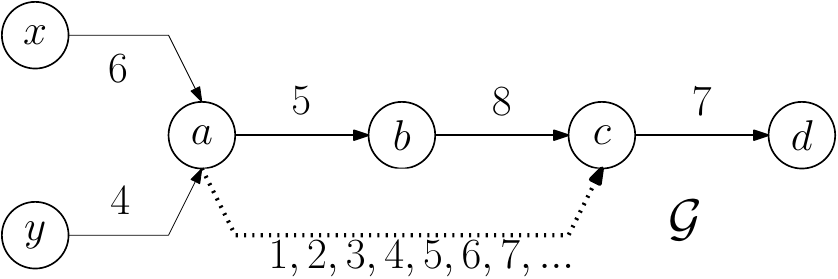}
    \caption{Counter Example for the Rechability Constraint}
    \label{fig:reachabilityConstraintCounterExample}    
\end{figure}
For all the approaches presented above, there exist simple examples, where reachability is increased. We provide one such example graph $\tempG = (V,\tempE)$ in \Cref{fig:reachabilityConstraintCounterExample}. We focus on a possible temporal shortcut $(a,c,t)$. If $t< 7$ then $a$ can reach $d$ and if $t\geq 7$ then $x$ can reach $c$, which are both new reachabilities. Given that $\tempG$ is a very simple example graph, we deduce that such a shortcut exists in many cases.
\begin{restatable}[]{corollary}{reachabilityIncreaseTTC}
    \label{crl:reachabilityIncreaseTTC}
    Given a temporal graph $\tempG=(V,\tempE)$ there may exist a temporal edge $(u,v,t)$ with $u$ and $v$ being reachable on the footprint of $\tempG$ that increases the reachabilities within $\tempG$.
\end{restatable}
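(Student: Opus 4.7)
The statement is purely existential, so my plan is to produce a small explicit temporal graph together with a candidate shortcut between two footprint-reachable vertices, and to argue that no choice of label $t\in\N$ for this shortcut avoids creating a new reachability. The figure accompanying the statement already depicts exactly such a configuration, with the shortcut $(a,c,t)$ playing the role of $(u,v,t)$ in the claim, and the proof amounts to a careful case analysis around it.

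First I would describe the skeleton: besides the shortcut endpoints $a$ and $c$, keep an auxiliary vertex $x$ upstream of $a$ and an auxiliary vertex $d$ downstream of $c$, and arrange labels so that $a$ reaches $c$ on the footprint of $\tempG$ (so $(a,c,t)$ is a legitimate candidate shortcut under every notion of temporal transitive closure discussed earlier in the chapter). Place the remaining edges so that $c$ has an outgoing edge to $d$ carrying an ``early'' label and $a$ has an incoming edge from $x$ carrying a ``late'' label. Then choose the labels along every temporal path from $a$ to $c$ in $\tempG$ so that any such path arrives at $c$ too late to continue on to $d$ and departs $a$ too early to be preceded by the $x\to a$ edge; this guarantees that neither the reachability $a\to d$ nor the reachability $x\to c$ is realized in $\tempG$ itself.

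The case analysis then splits on the chosen label against a single threshold $T$ (the value $7$ in the figure). If $t<T$, composing the shortcut with the outgoing edge to $d$ yields a strict temporal path $a\to c\to d$, producing a reachability absent from $\tempG$. If $t\geq T$, the incoming edge $x\to a$ composes with the shortcut into $c$, giving $x$ a new temporal path to $c$. Since $\{t<T\}\cup\{t\geq T\}=\N$, every choice of $t$ introduces some new reachability, which proves the claim. The main subtlety, and really the only place where a careless construction would fail, is choosing the labels so that the two regimes together cover $\N$ with no gap \emph{and} so that each new reachability is genuinely new rather than a mere shortening of an existing one; a single threshold $T$ handles the first concern automatically, while the second follows by inspection since the example contains only a handful of edges.
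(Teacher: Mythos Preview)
Your proposal is correct and mirrors the paper's own argument almost exactly: the paper exhibits the same configuration (the figure with $x,a,c,d$ and the shortcut $(a,c,t)$) and performs the identical threshold case split, observing that $t<7$ yields the new reachability $a\to d$ while $t\geq 7$ yields the new reachability $x\to c$. Your write-up is slightly more explicit about why the two cases exhaust $\N$ and why the created reachabilities are genuinely new, but the underlying idea and example are the same.
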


This of course includes the edges that would be added through the approaches presented above. It also implies that there exist temporal graphs that can not be shortcut to any given diameter $d$, i.e., $d=1$, without adding new reachabilities.

Recall that reachability in temporal graphs is not transitive, i.e., a vertex $a$ reaching $b$ and $b$ reaching $c$ does not imply that $a$ can reach $c$ as well. Intuitively, this explains why many shortcuts add new reachabilities to a temporal graph and why the concept of a transitive closure in temporal graphs or an equivalent concept cannot work.

Approaches that try to add labels to existing edges instead of adding new edges run into a similar problem. Due to the lack of transitivity of reachability adding new labels to existing static connections can easily add new reachabilities to the temporal graph. For example one could easily circumvent a temporal break (\Cref{def:temporalBreak}) by adding a smaller time label $t$ to a static edge, as demonstrated in \Cref{fig:addLabelCounterExample}.
\begin{figure}[h!]
    \centering
    \includegraphics[width=0.8\textwidth]{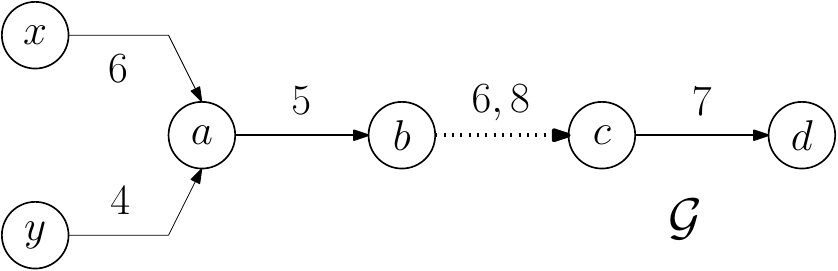}
    \caption{Counter Example for Reachability restriction, when adding labels to existing edges. The addition of $(b,c,6)$ adds new reachability.}
    \label{fig:addLabelCounterExample}
\end{figure}

We conclude that the restriction of reachability is not feasible in temporal graphs, when simply adding additional edges.
\section{Definition of a Temporal Diameter Shortcut Set}\label{section:temporalShortcutSet}
We explored different possible definitions of a temporal transitive closure and concluded that none of them imply a reachability constraint. We therefore do not base our definition on a temporal transitive closure and instead do not restrict a temporal shortcut by reachability. Of course for any temporal graph $\tempG$ there may exist temporal shortcut sets that add new reachabilities (see \Cref{fig:exampleTDSSnewReachabilities}) and ones that do not (see \Cref{fig:exampleTDSSnoRechabilites}).

Additionally, we do not regard the new reachabilities that are created through the temporal shortcut set and instead use the base diameter $d_\tempG(\tempG')$ as defined in \Cref{def:baseDiameter}, where $\tempG'$ is the shortcutted temporal graph. More specifically, if two vertices $u$ and $v$ are reachable after adding a temporal shortcut set $\TDSS$ and were not before, the distance between $u$ and $v$ may be larger than the base diameter $d_\tempG(\tempG')$ we achieve with $\TDSS$.
\begin{restatable}[Temporal Diameter Shortcut Set (TDSS)]{definition}{temporalShortcutSet}
    \label{def:temporalShortcutSet}
    Let $\tempG = (V,\tempE)$ be a temporal graph. A \emph{Temporal Diameter Shortcut Set} or just \emph{Temporal Shortcut Set} $\TDSS$ is a set of temporal edges $(u,v,t)$, where $u,v\in V$ and the base diameter $d_\tempG(\resultTempG=(V,\tempE\cup\TDSS))$ is reduced.\\
    We may refer to a Temporal Shortcut Set that achieves a base diameter $d_\tempG(\resultTempG)$ of at most $d$ in a temporal graph as a $d$-TDSS.
\end{restatable}

Analog to static shortcut sets, we call an edge $s\in\TDSS$ a \emph{shortcut} and may refer to adding a shortcut or shortcut set to a given temporal graph as \emph{shortcutting}. Note that for simplicity, if two newly reachable vertices $u$ and $v$ have distance larger than $d$, but $d_\tempG(\resultTempG)\leq d$, we still refer to $\TDSS$ as a $d$-TDSS.
\begin{figure}
    \centering
    \begin{subfigure}[b]{0.49\textwidth}
        \centering
        \includegraphics[width=\textwidth]{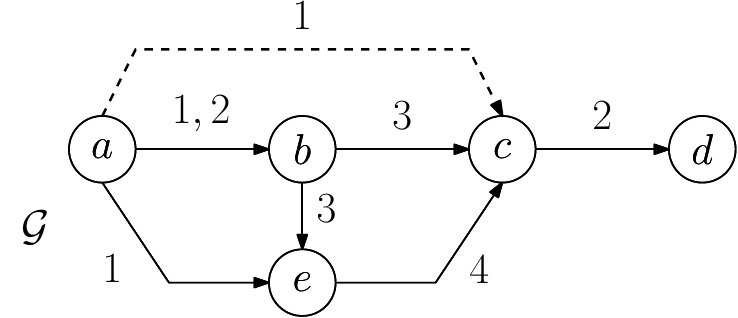}
        \caption{$2$-TDSS adding new reachabilities to $\tempG$}
        \label{fig:exampleTDSSnewReachabilities}
    \end{subfigure} 
    \hfil
    \begin{subfigure}[b]{0.49\textwidth}
        \centering
        \includegraphics[width=\textwidth]{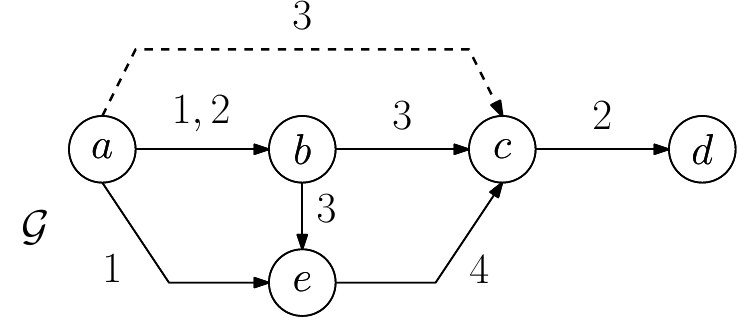}
        \caption{$2$-TDSS adding no reachabilities to $\tempG$}
        \label{fig:exampleTDSSnoRechabilites}
    \end{subfigure}
    \caption{Examples of TDSS on a temporal graph $\tempG$}
    \label{fig:examplesTDSS}
\end{figure}

The lack of constraint on the reachability added by a TDSS $\TDSS$ has some implications that may need to be considered.

Firstly, without the reachability constraint, it is possible to add cycles to a graph that previously did not have any. In a temporal graph with strict temporal journeys there cannot exist any temporal cycles. When examining non-strict temporal graphs this problem still exists, even if only in rare cases. 

Secondly and most importantly, we only regard the distances of vertices that were also previously reachable in $\tempG$. As described in \Cref{section:temporalTransitiveClosure}, many possible shortcuts add new reachabilities to the temporal graph. Considering the distances of these new reachabilties as well would seriously complicate the construction a temporal diameter shortcut set $\TDSS$. 
    \chapter{Temporal Diameter Shortcut Sets on Temporal Paths}
    \label{chapter:tdssPaths}
Given our definition of a temporal diameter shortcut set (TDSS) from the last chapter, we start by looking at very simple graphs. More specifically we start with temporal graphs $\tempG = (V,\tempE)$ where the footprint of $\tempG$ is a directed path.

In this chapter we show that for any temporal graph $\tempG$ with a directed path as its footprint we can shortcut the temporal paths in $\tempG$ independently without increasing the size of the TDSS as a whole.

Additionally, we will prove that a TDSS on a directed temporal path achieves the same results as a static shortcut set on the footprint of that temporal path. This implies that to shortcut a directed temporal path we can fully utilize the constructions of static shortcut sets and the computational results regarding their relative size. Combining these two findings, we show that any temporal graph $\tempG$ with a directed path as its footprint, can be shortcut optimally using static shortcut sets.

Note that for the purposes of this chapter we will address simple temporal graphs, i.e, temporal graphs $\tempG=(V,\tempE)$, where for every static edge $(u,v)$ on the footprint only one temporal edge $(u,v,t)\in\tempE$ exists.

\section{Path Independence}\label{section:pathIndependence}
We start by showing that separate temporal paths can be shortcut independently. For that let $\tempG=(P,E)$ be a temporal graph with a directed path $P$ as its footprint. We first introduce the term \emph{temporal break}, which is a vertex $v_x$ on the footprint of a path, that separates the vertices before and after into two independent temporal paths. This is the case, because the time label before $v_x$ is larger than the time label after $v_x$, which constrains the reachability between the two parts, thus breaking them apart. Intuitively a temporal break is a vertex that breaks the transitivity of reachability.
\begin{restatable}[Temporal Break]{definition}{temporalBreak}
    \label{def:temporalBreak}
    Let $\tempG = (P,\tempE)$ be a temporal graph with a static path $P = \langle v_1,v_2,...,v_k,v_{k+1}\rangle$ as its footprint. Given $(v_{x-1},v_x,t_i),(v_x,v_{x+1},t_j)\in\tempE$, we call a vertex $v_x$ a \emph{temporal break} if $t_j\leq t_i$.
\end{restatable}
\newpage A temporal break implies two separate temporal paths on $P$ starting at $v_1$ and $v_x$ respectively (see \Cref{fig:footprintPathExamples}). By definition of a temporal path (\Cref{def:temporalPath}) the two temporal paths are not reachable from one another through $P$.

For temporal graphs $\tempG$ with directed paths as their footprint we can differentiate between two different types:
\begin{enumerate}
    \item[(a)] The static path is also a temporal path, i.e., the labels (strictly) increase in topological order (see \Cref{fig:noTemporalBreak}) or
    \item[(b)] there exists a temporal break at a vertex $v_x$, as defined in \Cref{def:temporalBreak} (see \Cref{fig:temporalBreak}).\footnote{There could of course also exist multiple temporal breaks at different vertices, separating the graph into more than two temporal paths.}
\end{enumerate}
\begin{figure}
    \centering
    \begin{subfigure}[b]{0.8\textwidth}
        \centering
        \includegraphics[width=\textwidth]{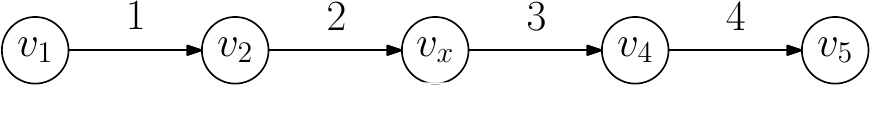}
        \caption{$\tempG$ without a temporal break}
        \label{fig:noTemporalBreak}
    \end{subfigure}
    \hfil
    \begin{subfigure}[b]{0.8\textwidth}
        \centering
        \includegraphics[width=\textwidth]{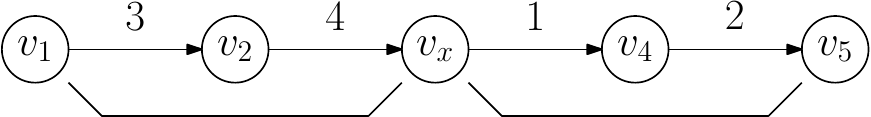}
        \caption{$\tempG$ with a temporal break at $v_x$}
        \label{fig:temporalBreak}
    \end{subfigure}
    \caption{Examples for (a) and (b) for $\tempG$ with a directed path as its footprint}
    \label{fig:footprintPathExamples}
\end{figure}

Inherently, the latter case implies that no vertices before $v_x\in V$ on the path can reach the vertices after $v_x$. Since $\tempG$ is directed, vertices after $v_x$ also cannot reach vertices before $v_x$. We deduce that we can add two disjoint temporal shortcut sets to the two subpaths implied by the temporal break at $v_x$ for a given desired base diameter $d$, which are not suboptimal in size.

Intuitively, adding a shortcut between the two implied subpaths can not reduce any distance on either of the subpaths. This implies that at least two shortcuts crossing from one subpath to the other are needed to achieve any distance reduction. Instead we could just add a single direct shortcut, which does not cross between the two subpaths.
\begin{restatable}[]{lemma}{pathIndependence}
    \label{lm:pathIndependence}
    Let $\tempG=(P,\tempE)$ be a temporal graph, where the footprint of $\tempG$ is a directed path $P = \{p_1,p_2,...,p_n\}$ with a temporal break in $\tempG$ at a vertex $p_x$. Let $S$ be a minimal temporal shortcut set on $\{p_1,p_2,...,p_x\}$ and $T$ be a minimal temporal shortcut set on $\{p_x,p_{x+1},...,p_n\}$. Then there exists no temporal shortcut set $U$ with $|U|<|S|+|T|$ that achieves the same base diameter as $S$ and $T$ in $\tempG$.
\end{restatable}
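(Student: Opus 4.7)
The plan is to argue the contrapositive: every TDSS $U$ on $\tempG$ achieving the same base diameter $d$ as $S\cup T$ must satisfy $|U|\geq |S|+|T|$. I partition $U$ according to the break at $p_x$ into $U_L$ (shortcuts with both endpoints in $\{p_1,\dots,p_x\}$), $U_R$ (shortcuts with both endpoints in $\{p_x,\dots,p_n\}$), and $U_C = U\setminus(U_L\cup U_R)$ (the crossing shortcuts). These subsets are pairwise disjoint, so $|U|=|U_L|+|U_R|+|U_C|$. The main thrust is to show that $U_L$ on its own is a valid TDSS for the left subpath achieving base diameter at most $d$, and symmetrically that $U_R$ suffices for the right subpath. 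Once this is in hand, the minimality of $S$ and $T$ forces $|U_L|\geq |S|$ and $|U_R|\geq |T|$, giving $|U|\geq |S|+|T|+|U_C|\geq |S|+|T|$ as desired.

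For the key claim on $U_L$, fix any pair $(p_i,p_j)$ with $i<j\leq x$ that is reachable in $\tempG$; such a pair is reachable only via a temporal path inside the left subpath, since the temporal break at $p_x$ severs left-to-right reachability. Pick a shortest temporal path $\pi$ from $p_i$ to $p_j$ in $\tempG\cup U$, of length at most $d$. I argue that $\pi$ may be assumed to remain entirely within the left subpath. Any excursion into the right subpath begins with a shortcut of $U_C$ from some left-side vertex $p_a$, proceeds through right-subpath edges, and ends with a second shortcut of $U_C$ returning to some left-side vertex $p_d$; thus each excursion consumes at least two crossing edges. A case analysis on $(a,d)$ shows the excursion is never strictly useful: if $a=d$ the excursion is a closed walk that can be deleted; if $a>d$, the forward-only orientation of the directed footprint together with the in-subpath prefix and suffix of $\pi$ forces $|\pi|$ to exceed the footprint distance $j-i$ from $p_i$ to $p_j$, contradicting shortestness of $\pi$; and if $a<d$ with footprint distance $d-a$ at most the excursion length, the break-free left subpath supplies a competitive in-subpath replacement via the footprint together with $U_L$. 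Iterating this removal yields a path of length at most $d$ inside the left subpath using only edges of $\tempG\cup U_L$, as required. The argument for $U_R$ on the right subpath is entirely symmetric.

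The chief obstacle is the remaining subcase $a<d$ in which the excursion is strictly shorter than the corresponding footprint path from $p_a$ to $p_d$, because then pure removal would lengthen $\pi$ beyond $d$. To close this gap I would employ an exchange step: delete the two crossing edges of the excursion and insert a single in-subpath shortcut $(p_a,p_d,t)\in U_L$ with an appropriately chosen time label, producing a modified TDSS $U^\ast$ that still achieves base diameter $d$ on $\tempG$ while satisfying $|U^\ast|\leq |U|-1$. Iterating this exchange over every excursion of this type yields a crossing-free TDSS no larger than the original $U$, after which the minimality of $S$ and $T$ applies directly and completes the proof.
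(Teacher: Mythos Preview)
Your approach---partition $U$ by the break at $p_x$, then exchange each crossing excursion (which costs at least two edges of $U_C$) for a single direct in-subpath shortcut---is exactly the paper's argument, which states the same replacement idea more tersely and without your explicit $(a,d)$ case split.

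There is, however, a gap in the exchange step that the paper also glosses over. When you delete the two crossing edges of a chosen excursion, those same edges may participate in \emph{other} excursions (paired with different partner edges) that realise the shortest path for some other left-subpath pair; a single outgoing edge $(p_a,q,t)$ can be the first hop of several distinct excursions landing at different $p_d$'s. Removing it can therefore raise distances you have not yet treated, so the claim that $U^\ast$ still achieves base diameter $d$ on all of $\tempG$ is unjustified as stated. Separately, your $a>d$ case does not go through: shortcuts in $U_L$ can make the prefix and suffix of $\pi$ far shorter than their footprint lengths, so nothing forces $|\pi|>j-i$. Both issues are repairable (for instance, by charging one new direct shortcut to each crossing edge rather than to each excursion, or by a global replacement that adds all needed direct shortcuts before deleting $U_C$), but the one-at-a-time exchange as written does not suffice.
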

\begin{figure}[h!]
    \centering
    \includegraphics[width=\textwidth]{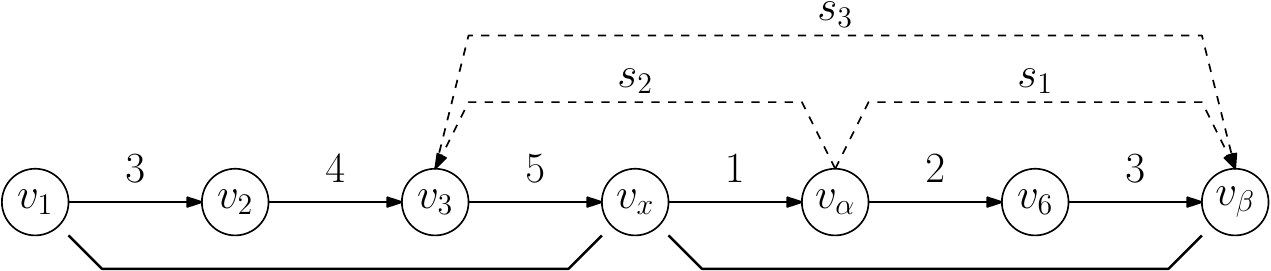}
    \caption{A temporal graph $\tempG$, where $s_1$ achieves a greater distance reduction than $s_2$ and $s_3$ combined between $v_\alpha$ and $v_\beta$.}
    \label{fig:pathIndependenceExample}
\end{figure}
\begin{proof}[Proof of \Cref{lm:pathIndependence}]
    Let $\tempG=(P,\tempE)$ be a temporal graph with a directed path $P = \{p_1,p_2,...,p_n\}$ as its footprint and a temporal break at $p_x$. Let $S$ be a minimal temporal shortcut set on $\{p_1,p_2,...,p_x\}$ and $T$ be a minimal temporal shortcut set on $\{p_x,p_{x+1},...,p_n\}$. Let $U$ be a temporal shortcut set on $\tempG$ and let $S$, $T$ and $U$ all achieve the same base diameter $d$ on their respective (sub)graphs.
    
    Firstly, if there exist no shortcuts $s_u=(p_i,p_j)\in U$ where either $i<x<j$ or $j<x<i$, we can partition $U$ into two disjoint shortcut sets on either subpath implied by the temporal break at $p_x$. Therefore $|U|<|S|+|T|$ can not hold, because $S$ and $T$ are minimal.

    We now assume that $|U|<|S|+|T|$. Let $s_u=(p_i,p_j)\in U$ now be a crossing shortcut as defined above. $s_u$  cannot reduce the distance between any two previously reachable vertices in $P$ on its own, since it does not connect two previously reachable vertices\footnote{Conversely $s_u$ even creates new rechabilities. Although we only regard the base diameter $d_\tempG((P,\tempE\cup U))$, avoiding this without loss of optimality would be desirable.}. Because $S$ and $T$ are minimal and $|U|<|S|+|T|$, the shortcuts crossing from one half of $P$ to the other need to decrease some distance between two previously reachable vertices. Given any set of such edges connecting two vertices $p_\alpha$ and $p_\beta$ with $\alpha,\beta>x$ w.l.o.g., adding a single shortcut $(p_\alpha,p_\beta)\in T$ achieves the same with fewer shortcuts added (see \Cref{fig:pathIndependenceExample}). Any distance reduction for vertices not in the same subpath are not necessary, as they are not reachable. Therefore, if no reachability exists and $S$ and $T$ are minimal, $|U|\geq|S|+|T|$ is implied.
\end{proof}

By \Cref{lm:pathIndependence} we can shortcut temporal paths implied by temporal breaks separately without increasing the size of the added temporal shortcut set. It may even be possible to partition more complex temporal graphs into temporal paths and shortcut them independently. 

For simple temporal graphs a temporal break always implies two temporal paths, where no vertex on one of the paths can reach a vertex on the other. Note that for non-simple temporal graphs, i.e., temporal graphs with more than one temporal edge per underlying static edge, this is not always the case. The lack of reachability between the two temporal paths is essential, which is of course also possible in non-simple temporal graphs.

\section{Static Shortcut Sets for Temporal Paths}\label{section:shortcutSetsOnTemporalPaths}
In this section we show that temporal shortcut sets on temporal paths actually achieve the same results as static shortcut sets. More specifically, given a static shortcut set $S$ on the footprint of a temporal path $\tempP$, we can easily add fitting time labels that achieve the same reachabilities and therefore the same diameter. Combined with the path independence explored in the previous section, this allows us to shortcut temporal graphs $\tempG$ with a directed path as their footprint using only static construction algorithms and also apply computational results for static shortcut sets to this subset of temporal graphs.

First of all, we can observe that a temporal path $\tempP$ and its underlying static path have the same reachabilities. If we now add a static $d$-shortcut set $S$ to the footprint of $\tempP$, we can achieve distance $d$ for all reachable vertices in $\tempP$ as well. All we need to do is add a set of time labels to the shortcuts $s\in S$ that do not restrict reachability.
\begin{restatable}[]{lemma}{addingReachableLabels}
    \label{lm:addingReachableLabels}
    Let $\tempP = \{p_1,p_2,...,p_n\}$ be a temporal path and $S$ be a static $d$-shortcut set on the footprint of $\tempP$. Then there exists a label $t_i\in\N$ for every shortcut $s_i=(p_i,p_j)\in S$, respectively, so that the temporal shortcut set $\TDSS = \{(s_i,t_i)|s_i\in S\}$ achieves diameter $d$.
\end{restatable}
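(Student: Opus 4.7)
The plan is to make the label assignment as simple as possible: for each shortcut $s_i = (p_i, p_j) \in S$, assign $t_i$ to be the time label of the original path edge $(p_i, p_{i+1})$ in $\tempP$. I would then argue that every short path in the static shortcutted graph lifts directly to a strict temporal path of the same length in the temporal shortcutted graph.

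Let me denote by $L_k$ the time label of the path edge $(p_k, p_{k+1})$, so that $L_1 < L_2 < \dots < L_{n-1}$ since $\tempP$ is a strict temporal path. The first step is to observe that because the footprint is a directed path, any walk between reachable vertices $p_a, p_b$ with $a < b$ in $(P, E \cup S)$ must visit vertices in strictly increasing order of index. Hence a shortest path of length $k \leq d$ from $p_a$ to $p_b$ can be written as $p_{v_0}, p_{v_1}, \dots, p_{v_k}$ with $a = v_0 < v_1 < \dots < v_k = b$, where each consecutive pair is joined either by an original path edge (if $v_{l+1} = v_l + 1$) or by a shortcut from $S$.

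The second step is to check that with the chosen labels, this path forms a valid strict temporal journey. Each edge leaving $p_{v_l}$ in the path carries label $L_{v_l}$: this is true by definition for an original edge, and true by construction for a shortcut $(p_{v_l}, p_{v_{l+1}}) \in S$. Since $v_0 < v_1 < \dots < v_{k-1}$ and the labels $L_k$ are strictly increasing in $k$, the labels traversed $L_{v_0} < L_{v_1} < \dots < L_{v_{k-1}}$ are strictly increasing, so the sequence is a valid strict temporal path of length $k$ in $(\tempP, \tempE \cup \TDSS)$.

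Combining the two steps, every pair of vertices that was reachable in $\tempP$ remains reachable within temporal distance at most $d$ in the shortcutted temporal graph, so $\TDSS$ achieves base diameter $d$. There is no genuine obstacle here because the directed path footprint forces all walks to move monotonically forward; the only thing to be careful about is to note that the assignment $L_i$ is well-defined for every shortcut, which holds since any shortcut $(p_i, p_j) \in S$ must satisfy $i < j \leq n$ and hence $i \leq n-1$, so $L_i$ exists.
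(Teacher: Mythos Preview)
Your proof is correct and follows essentially the same approach as the paper: assign to each shortcut a label borrowed from an original edge of the underlying temporal subpath it bypasses. The paper allows any such label in the range $L_i,\dots,L_{j-1}$ and argues by a replacement property (the shortcut can substitute for the subpath in any journey), whereas you fix the specific choice $L_i$ and argue more explicitly by lifting each static shortest path to a strict temporal path of the same length; both arguments rely on the same monotonicity of labels along the directed footprint.
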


Note that through the usage of a static shortcut set, we not only achieve a desired base diameter of $d$, but actually a standard diameter of $d$. 
\begin{proof}[Proof of \Cref{lm:addingReachableLabels}]
    Let $S$ be the static $d$-shortcut set for the footprint of $\tempP$. We want to construct a temporal shortcut set $\TDSS$ also achieving diameter $d$.
    
    Let $s=(p_i,p_j)\in S$. We search for a time label $t\in\N$ that does not restrict the usage of $s$ by adding a temporal to either $p_i$ or $p_j$. Since a static shortcut set preserves the reachabilities of the graph, every shortcut $s$ must travel in topological order of $\tempP$, i.e., $i<j$. Since $\tempP$ is a temporal path, the labels of its edges must (strictly) increase in topological order. Therefore any shortcut $s=(p_i,p_j)\in S$ that shortcuts a distance larger than $1$ can use any time label $t$ of a temporal edge $e = (p_k,p_l,t)\in\tempP$ with $i\leq k\leq l\leq j$. Since $e$ is part of $\tempP$, any temporal edge before $e$ reaches $p_i$ before $t$ and any temporal edge after $e$ departs from $p_j$ after $t$. Therefore, any connection using the temporal subpath $(p_i,...,p_j)\subseteq\tempP$ can now use $(s,t)$. This implies that the temporal edge $(s,t)\in\TDSS$ reduces the same distances as $s$ and adds no new reachabilities, proving the lemma.
\end{proof}

Using \Cref{lm:addingReachableLabels}, we can now use static shortcut sets on the footprint of a temporal path $\tempP$ to achieve a desired diameter $d$. Combining these with the path partitioning explored in \Cref{section:pathIndependence}, we are now able to able to fully utilize construction algorithms for static shortcut sets on temporal graphs $\tempG$ with a directed path as their footprint. Additionally, the computational results regarding the size of a static shortcut set relative to the diameter achieved also hold for this subset of temporal graphs.

    \chapter{Temporal Diameter Shortcut Sets using Static Expansions}
    \label{chapter:tdssExpansion}
In this chapter we will present a modified static expansion that allows us to create temporal diameter shortcut sets by translating static shortcut sets into the temporal context. After adding a static shortcut set to our static expansion $\staticTrans(\tempG)$ we deterministically translate the expansion back into a temporal graph with a corresponding temporal shortcut set added to it.

Compared to our other results, this approach can be applied to any directed temporal graph. We will also briefly discuss the size of the resulting temporal shortcut set compared to the static shortcut set on the static expansion. In the following sections we will show the correctness of our modified expansion $\staticTrans(\tempG)$ as defined in \Cref{def:staticExpansion} and the resulting shortcut set.
\begin{restatable}[]{theorem}{shortcutSetsOnExpansion}
    \label{thm:shortcutSetsOnExpansion}
    Let $\tempG=(V,\tempE)$ be a temporal graph and $\staticTrans(\tempG)$ be its static expansion as defined in \Cref{def:staticExpansion}. After adding a static shortcut set $S$ resulting in diameter $k$, we can translate the expansion into a shortcutted temporal graph $\resultTempG = (V,\tempE\cup\tempE^*)$ where $d_\tempG(\resultTempG)\leq k$.
\end{restatable}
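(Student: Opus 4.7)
The plan is to prove this via a length-preserving correspondence between paths in the modified static expansion $\staticTrans(\tempG)$ and temporal journeys in the shortcutted temporal graph $\resultTempG$. The crucial property the modified expansion must satisfy (and which I would verify from \Cref{def:staticExpansion}) is that temporal reachability in $\tempG$ is faithfully represented by static reachability between the corresponding node copies, while the hop count of a static path translates to the hop count of the induced temporal journey, with waiting edges not contributing to the temporal count.

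First, I would set up the translation of the shortcut set itself. Each shortcut $s \in S$ connects two time-indexed copies of vertices in the expansion, say $u_{t_1}$ and $v_{t_2}$ with $t_1 \leq t_2$; I would define its temporal counterpart in $\tempE^*$ to be the temporal edge $(u, v, t_1)$ (or the analogous label dictated by \Cref{def:staticExpansion}), chosen so that any concatenation with original temporal edges preserves the strict-increase condition of \Cref{def:temporalPath}. This is well-defined because the expansion places each cross-edge at a time coordinate consistent with $\tempE$.

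Next, I would transfer the diameter bound. Take any pair $u, v \in V$ that is temporally reachable in $\tempG$; by the reachability correspondence of the expansion there exist node copies $u_{t_s}, v_{t_e}$ connected by a static path in $\staticTrans(\tempG)$, and after adding $S$ this path can be shortened to one of length at most $k$. Walking along this shortest path $P$ edge by edge, each edge is either (i) a waiting edge within one column $V_v$, which corresponds to \emph{remaining} at vertex $v$ and contributes no temporal hop, (ii) an original cross-edge corresponding to some $(x,y,t) \in \tempE$, or (iii) a shortcut from $S$ which has been translated to an edge in $\tempE^*$. Concatenating the non-waiting edges yields a sequence of temporal edges from $u$ to $v$ in $\resultTempG$ whose time labels are non-decreasing along $P$ by the expansion's time-ordered layout, hence forms a valid temporal journey using at most $k$ temporal edges. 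Therefore $d_\tempG(\resultTempG) \leq k$.

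The main obstacle I anticipate is justifying that the static distance in $\staticTrans(\tempG)+S$ upper bounds the temporal hop distance, rather than merely matching it up to some blow-up. Concretely, I need the modified expansion to be designed so that arbitrarily many waiting edges can be traversed ``for free'' from the temporal perspective, or alternatively so that every shortest static path uses only a bounded number of waiting edges that can be absorbed into the translation. A secondary subtlety is the consistent assignment of time labels to the translated shortcuts: the chosen label must simultaneously allow the shortcut to be used by every temporal journey that the corresponding static shortcut shortens in the expansion, which is where the specific layout of \Cref{def:staticExpansion} (presented next) does the essential work.
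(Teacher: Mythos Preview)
Your proposal is correct and follows essentially the same route as the paper: translate each cross-gadget edge (original or shortcut) to a temporal edge labeled by the time of its source vertex, drop the intra-gadget (``waiting'') edges, and observe that the resulting temporal path has length at most the static path length $k$ (this is the paper's \Cref{lm:transfoEdgeTranslation} and \Cref{lm:transfoPathTranslation}). Your anticipated obstacle is a non-issue---the waiting edges are already counted in the static diameter $k$, so dropping them only helps---and note that the expansion's layout actually guarantees \emph{strictly} increasing labels along the translated path, matching the strict model of \Cref{def:temporalPath}.
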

We will begin by defining the modified static expansion and observing some properties of shortcut sets on the expansion in \Cref{section:expansionDefinition}, then define how to translate the expansion back into a temporal graph and prove the validity of that graph in \Cref{section:expansionEdgeTranslation} and finally prove the base diameter achieved by the resulting temporal shortcut set and with that \Cref{thm:shortcutSetsOnExpansion} in \Cref{section:expansionDiameterResults}.
\section{Definition}\label{section:expansionDefinition}
We begin by defining the static expansion we use in \Cref{thm:shortcutSetsOnExpansion}. Typically a static expansion creates a timed vertex $v_t$ for every vertex $v$ and time label $t$ in the lifecycle of the graph. Every timed vertex $v_t$ is then connected in temporal order to all other timed vertices based on $v$. A temporal edge $(u,v,t)$ would then correspond to an edge $(u_t,v_{t+1})$ in the static expansion.

Our definition differs slightly from other definitions in literature, which we also described in the introduction. First, we add two vertices per vertex $v$ and time label $t$, for incoming edges and outgoing edges respectively. Secondly, we connect timed vertices that are based on the same vertex $v$ differently. Specifically we only add edges from incoming to outgoing timed vertices.
\begin{restatable}[Static Expansion]{definition}{staticExpansion}
    \label{def:staticExpansion}
    Let $\tempG=(V,\tempE)$ be a temporal graph. For every vertex $v\in V$ we create a \emph{gadget} as follows:
    \begin{enumerate}
        \item For every $t_{min}\leq t\leq t_{max}+1$ create a vertex $v_t^{in}$ and $v_t^{out}$ called \emph{incoming} and \emph{outgoing timed vertices} of $v$. We call $v$ the \emph{base vertex} of a timed vertex $v_t^{in}$ or $v_t^{out}$.
        \item For every incoming timed vertex $v_t^{in}$ add an edge to every outgoing timed vertex $v_{t'}^{out}$, where $t'\geq t$. We call the set of these edges $E_v$ and the union $E_V=\bigcup_{v\in V}E_v$.
    \end{enumerate}
    Given these \emph{gadgets} we create the \emph{static expansion} $\staticTrans(\tempG)=(V_T,E)$ where $V_T$ is the set of all timed vertices based on the gadget construction as explained above and $E=\{(v_t^{out},w_{t'}^{in})\mid (v,w,t)\in\tempE\land t'= t+1\}\cup E_V$ (also see \Cref{fig:staticExpansionExample}).
\end{restatable}
\begin{figure}[h]
    \includegraphics[width=\textwidth]{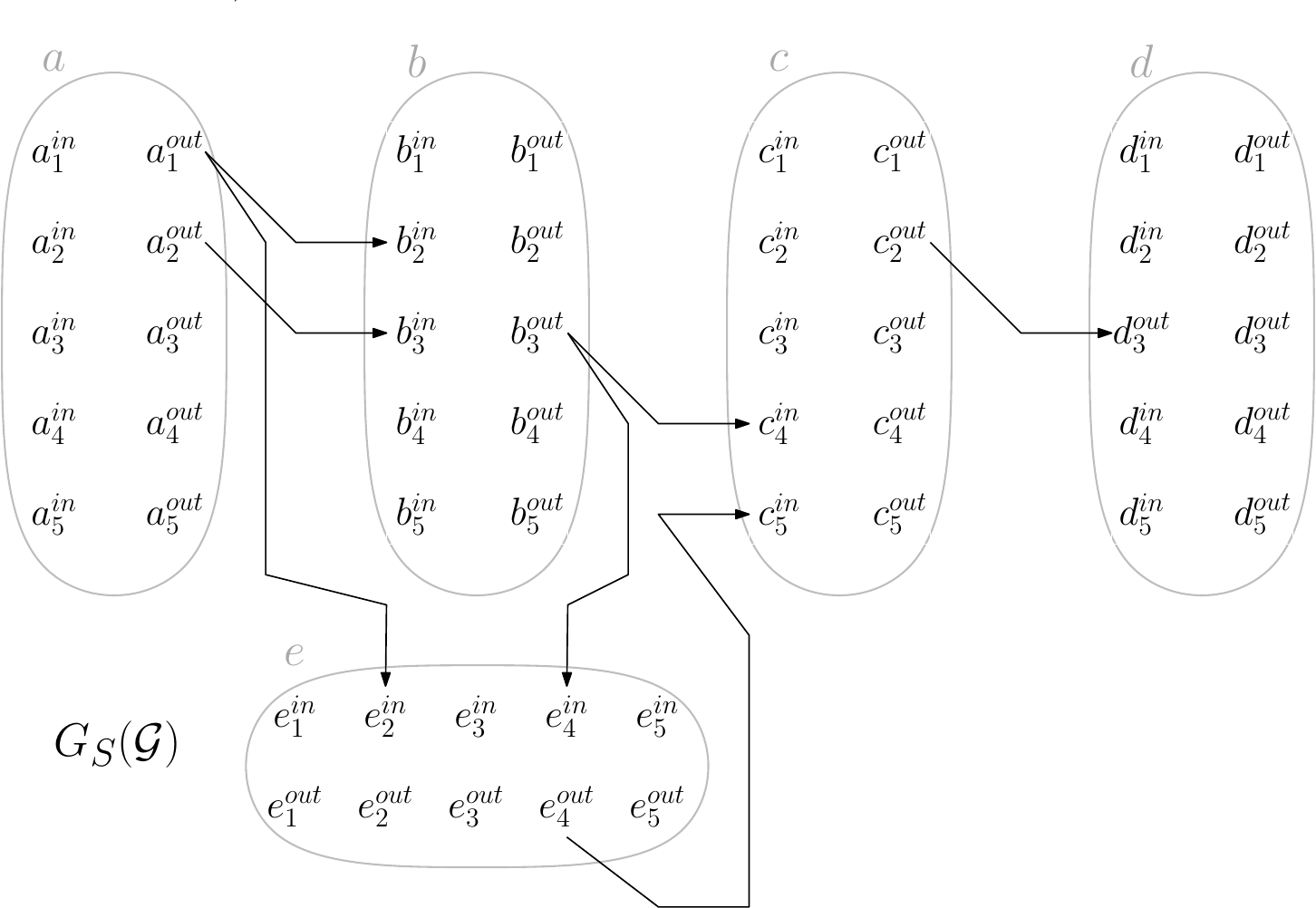}
    \caption{Example of a static expansion $\staticTrans(\tempG)$\footnote{Edges inside of gadgets are omitted from illustrations for better readability. See \Cref{def:staticExpansion} for details.} based on $\tempG$ from \Cref{fig:exampleTemporalGraph}.}
    \label{fig:staticExpansionExample}
\end{figure}
In most other definitions only one timed vertex per time $t$ is used per base vertex. The main benefit of using separate vertices and adding the transitive closure of all gadgets to $\staticTrans(\tempG)$ is that every gadget is traversable in one hop, no matter which vertices are used to traverse it. This also implies a fixed ratio between the diameter of a temporal graph $\tempG$ and its static expansion $\staticTrans(\tempG)$ as stated in \Cref{crl:diameterRatioOriginalToStatic}. Note that the addition of incoming and outgoing timed vertices may not be necessary for this construction to work, although this may be useful for further usage in the future.

\subsection*{Static Shortcut Sets on the Expansion}\label{subsection:shortcutSetsOnExpansion}
The goal of this approach is to be able to use computational results and construction algorithms for static shortcut sets and translate them to the temporal context.

Given a static expansion $\staticTrans(\tempG)$ we can add a static shortcut set $S$ as defined in \Cref{def:staticShortcutSet} to $\staticTrans(\tempG)$. We assume the addition of $S$ results in a diameter of $k$ in $\staticTrans(\tempG)$.

Based on the definition of $S$ we can make a few observations. First of all, recall that a static shortcut set is based on the transitive closure (\Cref{def:staticTransitiveClosure}) of the static graph $G$, i.e., $S\subseteq TC(\tempG)$. We also observe some restrictions for the time labels of the timed vertices used by edges in $S$, namely that a shortcut $s=(\alpha,\beta)\in S$ always travels forward in time.
\begin{restatable}[]{corollary}{expansionShortcuts}
    \label{crl:expansionShortcuts}
    For any edge $s=(\alpha,\beta)\in S$, where $\alpha$ and $\beta$ are any two timed vertices in $\staticTrans(\tempG)$, $t_\alpha<t_\beta$.
\end{restatable}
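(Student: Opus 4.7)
The plan is to exploit the layered structure of $\staticTrans(\tempG)$, in which the time index is monotone along every directed edge. First, I would inspect the two edge types given by \Cref{def:staticExpansion}. Gadget edges have the form $v_t^{in} \to v_{t'}^{out}$ with $t' \ge t$, and between-gadget edges have the form $v_t^{out} \to w_{t+1}^{in}$, which strictly increase the time index by one. Hence every directed edge of $\staticTrans(\tempG)$ is non-decreasing in its time index.

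A straightforward induction on path length then yields $t_\alpha \le t_\beta$ for every directed path from $\alpha$ to $\beta$ in $\staticTrans(\tempG)$: the base case is a single edge (handled above), and the inductive step concatenates a non-decreasing prefix with a non-decreasing final edge. Since $S$ is drawn from the transitive closure of $\staticTrans(\tempG)$, every shortcut $(\alpha,\beta) \in S$ is witnessed by such a path and so satisfies the non-strict version of the claim.

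To sharpen this to the strict inequality, I would argue that $t_\alpha = t_\beta$ along a directed path is only possible when the path consists of a single gadget edge $v_t^{in} \to v_t^{out}$ at a common time $t$: every between-gadget edge strictly increases time, and every gadget edge of the form $v_t^{in} \to v_{t'}^{out}$ with $t' > t$ also does, so any path containing at least one such edge (or more than one edge) must strictly increase the time index. Such a trivial same-time pair is already adjacent in $\staticTrans(\tempG)$, so it contributes nothing as a shortcut and can be excluded from $S$ without loss of generality.

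The main subtlety is this last step: the strict inequality depends on the implicit convention that a shortcut set consists of genuinely distance-reducing edges added on top of the existing edge set, rather than edges already present in $\staticTrans(\tempG)$. Once that convention is fixed, the structural monotonicity itself is immediate from the edge classification of \Cref{def:staticExpansion}.
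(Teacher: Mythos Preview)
Your argument is correct and matches the paper's own justification, which is only a one-line remark that the corollary follows from \Cref{def:staticExpansion}, \Cref{def:staticShortcutSet}, and the fact that $\alpha$ and $\beta$ cannot lie in the same gadget. You are in fact more careful than the paper: where the paper simply asserts the same-gadget case away, you explicitly verify that the only reachable pair with $t_\alpha = t_\beta$ is the single gadget edge $v_t^{in}\to v_t^{out}$, which is already present in $\staticTrans(\tempG)$ and hence redundant as a shortcut.
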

\Cref{crl:expansionShortcuts} follows directly from the definition of the modified static expansion (\Cref{def:staticExpansion}) and the static shortcut set (\Cref{def:staticShortcutSet}) and the fact that $\alpha$ and $\beta$ cannot lie in the same gadget in $\staticTrans(\tempG)$.

These observations will serve as a basis for proving \Cref{thm:shortcutSetsOnExpansion} throughout the next sections.
\section{Edge Translation}\label{section:expansionEdgeTranslation}
Now that we have expanded our static expansion $\staticTrans(\tempG)$ using a shortcut set $S$, we want to define how each edge in $E\cup S$ can be translated into a temporal edge inside a temporal Graph $\resultTempG=(V, \tempE')$ that uses the same vertex set as $\tempG$ and $\tempE'\supseteq\tempE$.
\newpage
\begin{restatable}[]{definition}{edgeTranslation}
    \label{def:edgeTranslation}
    Given an edge $e = (u,v)\in E\cup S$, we differentiate the following cases:
    \begin{enumerate}[noitemsep]
        \item If $u$ and $v$ are timed vertices of the same gadget, we ignore this edge and do not translate it back, since gadgets translate back to their base vertices in the temporal graph.
        \item If $u$ and $v$ lie in different gadgets, $(u,v)$ translates to $(u,v,t)$, where $t$ is the time label of $u$.
    \end{enumerate}
\end{restatable}
We prove that this results in only valid temporal edges with the following lemma.
\begin{restatable}[]{lemma}{transfoEdgeTranslation}
    \label{lm:transfoEdgeTranslation}
    Given a static expansion $\staticTrans(\tempG)=(V_T, E)$ and a set of shortcut edges $S$ on the expansion, for every edge $e = (u,v)\in S\cup E\setminus E_V$, there exists a valid temporal edge $(u,v,t)$.
\end{restatable}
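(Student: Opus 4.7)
The plan is to case-split by the origin of $e$ in $(S \cup E) \setminus E_V$ and, in each case, verify that the rule from \Cref{def:edgeTranslation} yields a temporal edge whose endpoints lie in $V$, are distinct, and whose time label lies in the lifecycle $[t_{min}, t_{max}]$.

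First, I would dispose of the easy case $e \in E \setminus E_V$. By \Cref{def:staticExpansion}, every edge of $E$ that does not lie inside a gadget has the form $(v_t^{out}, w_{t+1}^{in})$ with $(v,w,t) \in \tempE$, so its translation is literally the original temporal edge $(v,w,t)$ and there is nothing further to check.

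The substantive case is a shortcut edge $s = (\alpha,\beta) \in S$ whose endpoints lie in different gadgets. Write $u$ and $v$ for the base vertices of $\alpha$ and $\beta$, and $t_\alpha, t_\beta$ for their time labels. The translation is $(u,v,t_\alpha)$. That $u,v \in V$ is immediate from the gadget construction, and $u \neq v$ follows from $\alpha,\beta$ sitting in different gadgets. The only nontrivial check is that $t_\alpha$ is a valid time label, i.e.\ $t_\alpha \in [t_{min},t_{max}]$. The lower bound $t_\alpha \geq t_{min}$ is built into \Cref{def:staticExpansion}. For the upper bound I would combine \Cref{crl:expansionShortcuts}, which forces $t_\alpha < t_\beta$, with the fact that the largest time label appearing in $\staticTrans(\tempG)$ is $t_{max}+1$, yielding $t_\alpha \leq t_{max}$.

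I expect the single subtle point to be this upper bound: without the forward-in-time property of shortcuts, a shortcut leaving a timed vertex at time $t_{max}+1$ would translate to a temporal edge outside the lifecycle of $\tempG$. Everything else reduces to a direct unpacking of the definitions of the static expansion and of the edge translation rule.
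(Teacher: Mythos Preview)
Your proposal is correct and follows essentially the same approach as the paper: the same case split into original expansion edges versus shortcut edges, and the same appeal to \Cref{crl:expansionShortcuts} for the shortcut case. If anything, your argument is slightly more careful than the paper's, since you explicitly verify $t_\alpha \leq t_{max}$ via $t_\alpha < t_\beta \leq t_{max}+1$, whereas the paper just states that the forward-in-time property lets one ``translate $e$ into a valid temporal edge $(u,v,t_u)$ similarly to the first case'' without spelling out the bound.
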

\begin{proof}[Proof of \Cref{lm:transfoEdgeTranslation}]
    We differentiate between edges in $E$ and edges in $S$.\\
    Firstly, if $e\in E$ then it was part of the original static expansion. Since we exclude $E_V$, which are the edges lying inside of a gadget, by \Cref{def:staticExpansion} this edge must originate from some timed outgoing vertex $v_t^{out}$ and end at some timed incoming vertex $w_{t+1}^{in}$. This would then translate to $(v,w,t)\in\tempE'$, which is a valid temporal edge in $\resultTempG$.
    
    Now let $e = (u,v)\in S$, meaning it was added onto the static expansion as a shortcut edge. Because of \Cref{crl:expansionShortcuts} we know that $e$ does not go back in time. With this information we can translate $e$ into a valid temporal edge $(u,v,t_u)$ similarly to the first case.
\end{proof}
Note that some edges within the shortcutted static expansion $\staticTrans(\tempG)$ may translate to the same temporal edge in $\resultTempG$ (see \Cref{fig:exampleTranslationToSameEdge}).
\begin{figure}
    \centering
    \includegraphics[width=0.7\textwidth]{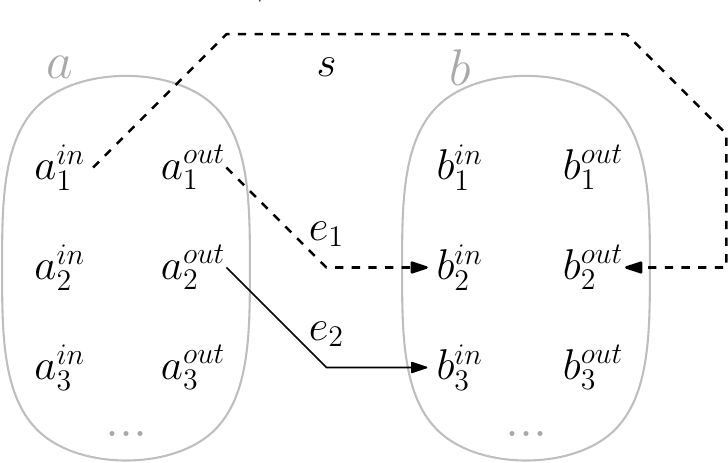}
    \caption{Example of two edges $e_1$ and $s$ translating to the same temporal edge $(a,b,1)$}
    \label{fig:exampleTranslationToSameEdge}
\end{figure}
This means we possibly ignore some translations, if the edge already exists in $\resultTempG$. We will go into further detail regarding the number of shortcuts added to $\resultTempG$ later.
\section{Diameter Results}\label{section:expansionDiameterResults}
Knowing that edges from the static expansion including the shortcut set $S$ translate to a valid temporal graph $\resultTempG$, we now want to determine what results $\resultTempG$ yields regarding its diameter. Hence, we need to analyze the paths in $\staticTrans(\tempG)$ and compare them to the corresponding paths in $\resultTempG$.

Recall that we only regard the base diameter $d_\tempG(\resultTempG)$ of the resulting temporal graph $\resultTempG$, when analyzing the temporal shortcut set $\TDSS$, that was added through the static expansion. For any path $P$, its equivalent path in $\resultTempG$ is formed from the translated edges as defined in \Cref{def:edgeTranslation}. All edges that do not translate (i.e. edges within gadgets) are omitted from the path.
\begin{restatable}[]{lemma}{transfoPathTranslation}
    \label{lm:transfoPathTranslation}
    Given a static expansion $\staticTrans(\tempG)=(V_T,E)$ and a set of shortcut edges $S$ on the expansion, any path $P$ within the resulting graph is longer or equal in length to its equivalent path $P'$ in $\resultTempG$.
\end{restatable}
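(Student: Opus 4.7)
The plan is to translate the edges of $P$ one at a time via \Cref{def:edgeTranslation}, show that the resulting sequence $P'$ is a valid temporal journey in $\resultTempG$, and argue that its length is at most $|P|$. The length bound should be immediate from the definition: each edge of $P$ either lies inside a single gadget, in which case it translates to nothing, or crosses between two gadgets, in which case it contributes exactly one temporal edge to $P'$. Hence $|P'|$ is bounded by the number of gadget-crossing edges of $P$, which is at most $|P|$.

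For the validity of $P'$ as a temporal path, I would first check that consecutive translated edges share an endpoint. Between two consecutive gadget-crossing edges of $P$ every intermediate edge (if any) is gadget-internal, so the walk stays inside a single gadget whose base vertex $w$ becomes both the destination of the earlier translated temporal edge and the source of the next, matching what is required of a temporal path.

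The main obstacle is verifying that the time labels along $P'$ strictly increase; I plan to handle this by a short case analysis on each gadget-crossing edge of $P$. If such an edge is an original expansion edge $v_t^{out}\to w_{t+1}^{in}$, it translates to $(v,w,t)$ and enters the next gadget at time $t+1$; if it is a shortcut $(\alpha,\beta)\in S$, then by \Cref{crl:expansionShortcuts} we have $t_\alpha<t_\beta$, so the translation carries time $t_\alpha$ while the endpoint $\beta$ already sits at time $t_\beta\ge t_\alpha+1$. Since gadget-internal edges only go from $w_t^{in}$ to $w_{t'}^{out}$ with $t'\ge t$, the successor gadget-crossing edge must leave gadget $w$ at some time $t''\ge t_\alpha+1$, which strictly exceeds the time label $t_\alpha$ just used. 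Iterating this along $P$ yields the strict temporal monotonicity of $P'$, and combined with the length bound above this completes the proof.
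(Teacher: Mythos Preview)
Your argument is correct and follows essentially the same route as the paper: translate each edge of $P$ via \Cref{def:edgeTranslation}, observe that gadget-internal edges drop out while gadget-crossing edges contribute exactly one temporal edge, and conclude $|P'|\leq |P|$ with consecutive translated edges sharing a base vertex. Your explicit verification of strict time-monotonicity via \Cref{crl:expansionShortcuts} and the $t'\geq t$ structure of gadget-internal edges is a level of detail the paper's own proof leaves implicit, so your write-up is in fact more complete on that point.
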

\begin{proof}[Proof of \Cref{lm:transfoPathTranslation}]
    Firstly, since edges within gadgets start and end within the same base vertex, $P'$ is still a connected path within $\resultTempG$. Because every edge of $P$ translates to exactly one edge for $P'$ and some edges may be omitted, $|P|\geq |P'|$.
    
    Conversely given a path $P'$ in $\resultTempG$, for a corresponding path $P$ in $\staticTrans(\tempG)$ with $S$ added to it $|P'|\leq |P| \leq 2|P'|+1$ holds, since $P$ can use an edge within a single gadget, which would add to its length beyond the length of $P'$ at most $|P'|+1$ times.
\end{proof}
Note that there may exist some path $P$ in $\resultTempG$, which has no corresponding path in $\staticTrans(\tempG)$. Because of this, the actual diameter of $\resultTempG$ may be larger than the base diameter $d_\tempG(\resultTempG)$ in $\resultTempG$. We know that $d_\tempG(\resultTempG)$ is bound by the diameter of $\staticTrans(\tempG)$ after being shortcut, because the reachabilites between gadgets (and between their corresponding base vertices in $\tempG$) are the same and because of the correctness of \Cref{lm:transfoPathTranslation}. We now formally prove \Cref{thm:shortcutSetsOnExpansion} using the information gathered in the previous sections.

\shortcutSetsOnExpansion*

\begin{proof}[Proof of \Cref{thm:shortcutSetsOnExpansion}]
    Let $\staticTrans(\tempG)$ be a static expansion as defined in \Cref{def:staticExpansion} and let $S$ be an added shortcut set resulting in a diameter $k$. We can translate this graph into a valid temporal graph $\resultTempG$ according to \Cref{lm:transfoEdgeTranslation}. According to \Cref{lm:transfoPathTranslation} no path in $\resultTempG$ between vertices reachable in $\tempG$ (and therefore in $\staticTrans(\tempG)$) can be longer than $k$, as $k$ is the diameter of $\staticTrans(\tempG)$ with $S$ added. This implies that $\resultTempG$ has a base diameter $d_\tempG(\resultTempG)$ of at most $k$, proving the theorem.
\end{proof}

\section{Discussion \& Further Observations}\label{section:expansionDiscussion}
Given \Cref{thm:shortcutSetsOnExpansion}, we have proven the basic construction and correctness of our approach. In this section we present some further observations and analyze the size of the temporal shortcut set in the resulting temporal graph.

First, the diameter ratio between the resulting temporal graph $\resultTempG$ and the static expansion $\staticTrans(\tempG)$ shortcutted with $S$ gives us a few more insights. We can observe a fixed ratio between the diameter of the original temporal graph $\tempG$ and its static expansion $\staticTrans(\tempG)$.
\begin{restatable}[]{corollary}{diameterRatioOriginalToStatic}
    \label{crl:diameterRatioOriginalToStatic}
    Given a temporal graph $\tempG$ with diameter $k$ the static expansion $\staticTrans(\tempG)$ as defined in \Cref{def:staticExpansion} has a diameter of $2k+1$ 
\end{restatable}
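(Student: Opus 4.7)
The plan is to establish both bounds on the diameter of $\staticTrans(\tempG)$: at most $2k+1$ and at least $2k+1$. Both directions hinge on the rigid alternation of edge types in the expansion.

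First I would record a structural observation: by \Cref{def:staticExpansion} every edge of $\staticTrans(\tempG)$ is either an \emph{internal} edge inside one gadget (from some $v_t^{in}$ to some $v_{t'}^{out}$ with $t' \geq t$) or a \emph{between-gadget} edge from $v_t^{out}$ to $w_{t+1}^{in}$ arising from a temporal edge $(v,w,t) \in \tempE$. Incoming timed vertices emit only internal edges and outgoing timed vertices emit only between-gadget edges, so every directed path in the expansion strictly alternates the two types. In particular, a path with $j$ between-gadget edges has length at most $2j+1$, with equality exactly when it begins at an incoming vertex and ends at an outgoing vertex.

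For the upper bound, I would take any shortest path between a connected pair of timed vertices and collapse it to a temporal journey in $\tempG$ by mapping each between-gadget edge back to its corresponding temporal edge (the inverse of \Cref{def:edgeTranslation}). The alternation guarantees that this collapse produces a strict temporal journey whose hop count equals the number of between-gadget edges along the path. Since $\tempG$ has diameter $k$, this journey has at most $k$ hops, so the shortest path uses at most $k$ between-gadget edges and the structural observation then bounds its length by $2k+1$.

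For the lower bound, I would pick base vertices $u, w \in V$ realising the diameter so that every journey from $u$ to $w$ has at least $k$ hops, and consider the timed pair $u_{t_{min}}^{in}$ and $w_{t_{max}+1}^{out}$ in $\staticTrans(\tempG)$. Any path between them collapses to a $u$-to-$w$ journey and therefore must use at least $k$ between-gadget edges; starting on an incoming vertex and ending on an outgoing vertex then forces at least $k+1$ internal edges by the alternation, summing to at least $2k+1$ in total. The main obstacle I foresee is the parity bookkeeping at the endpoints: I would need to verify explicitly that the in-to-out configuration is the one that attains $2j+1$ edges while the other three endpoint configurations (out-to-in, in-to-in, out-to-out) yield paths of length $2j-1$ or $2j$ respectively, so that the diameter is indeed realised at an in-to-out pair and no other pairing can exceed it.
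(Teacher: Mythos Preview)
Your lower-bound direction and the structural alternation observation are sound, and they articulate precisely the intuition the paper relies on. The paper itself gives essentially no proof: it just says the corollary ``follows directly from \Cref{def:staticExpansion}'' because every gadget is traversable in one hop, so a $k$-edge temporal path picks up one extra internal edge per base vertex and becomes a $(2k+1)$-edge expansion path. Your write-up is a substantial elaboration of that one sentence.

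The upper-bound direction, however, has a real gap. You collapse a shortest expansion path to a temporal journey and then assert that ``since $\tempG$ has diameter $k$, this journey has at most $k$ hops.'' But the diameter only bounds the \emph{shortest} journey between the two base vertices; the journey obtained by collapsing is merely \emph{some} journey, and nothing in your argument forces it to be shortest. The point is not academic: a timed vertex such as $u_t^{out}$ has exactly one outgoing edge in $\staticTrans(\tempG)$, so any expansion path starting there is committed to the departure time $t$ even when a shorter temporal path from $u$ leaves at a different time. For a concrete witness, take $\tempG$ on $\{a,b,c\}$ with temporal edges $(a,b,1)$, $(b,c,2)$ and $(a,c,5)$. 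The temporal diameter is $k=1$, yet in $\staticTrans(\tempG)$ the only path from $a_1^{in}$ to $c_3^{out}$ is $a_1^{in}\to a_1^{out}\to b_2^{in}\to b_2^{out}\to c_3^{in}\to c_3^{out}$ of length $5>2k+1$. The paper's one-line justification glosses over this as well, so the corollary as literally stated is imprecise; your argument cannot be completed as written because the failure is in the claim itself, not merely in the bookkeeping.
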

\Cref{crl:diameterRatioOriginalToStatic} follows directly from \Cref{def:staticExpansion}. More specifically, since any gadget in $\staticTrans(\tempG)$ can be traversed in one hop, any path in $\tempG$ is extended by one edge per base vertex or gadget in $\staticTrans(\tempG)$ on that path.

Given that \Cref{thm:shortcutSetsOnExpansion} only provides an upper bound for the base diameter of $\resultTempG$, we explored wether one could achieve a base diameter $k$ in $\resultTempG$ with a shortcut set $S$ achieving a diameter larger than $k$ in $\staticTrans(\tempG)$. 

Unfortunately there exist some shortcut sets $S$ on the expansion achieving a diameter $k$ with $\resultTempG$ also having a base diameter $k$. In \Cref{fig:diameterRatioCounterExample} $S_1$ and $S_2$ both achieve a diameter of $k=5$. After translating both constructions back into a temporal graph, the graph translated from $S_1$ has a base diameter of $k = 5$, while the other temporal graph has a base diameter of $k=2$. Given that the temporal graph, on which the static expansions in \Cref{fig:diameterRatioCounterExample} are based, has diameter $4$, the diameter ratio given in \Cref{crl:diameterRatioOriginalToStatic} holds for the static expansion shortcutted with $S_2$. The existence of $S_1$ implies that the diameter ratio does not hold for any static shortcut set though. Hence, the ratio given in \Cref{crl:diameterRatioOriginalToStatic} can not be applied between the shortcutted static expansion and the resulting temporal graph $\resultTempG$.
\begin{figure}
    \centering
    \includegraphics[width=\textwidth]{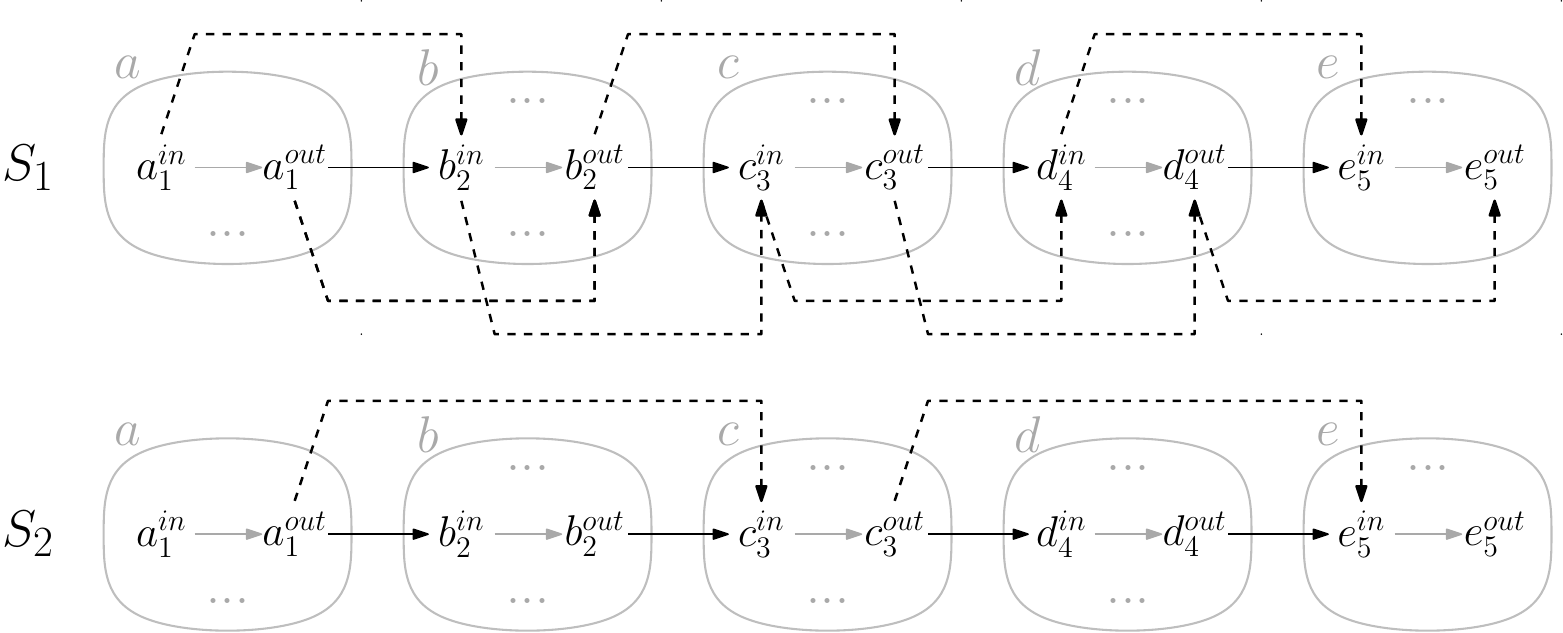}
    \caption{The shortcut sets $S_1$ and $S_2$ on a static expansion $\staticTrans(\tempG)$ both achieving a diameter of $k=5$.\footnote{Some timed vertices are omitted from the gadgets to improve readability.}}
    \label{fig:diameterRatioCounterExample}
\end{figure}

On the other hand the existence of $S_1$ (see \Cref{fig:diameterRatioCounterExample}) implies that there exist some shortcut sets, specifically those where many shortcuts or even every shortcut connects an outgoing to an incoming timed vertex, that achieve a much better base diameter in $\resultTempG$. As stated in \Cref{crl:diameterRatioOriginalToStatic} the best possible diameter ratio between $\staticTrans(\tempG)$ and the resulting temporal graph $\resultTempG$ would be $2k+1$ to $k$. To achieve such a shortcut set at all times, we would need to adjust the construction of $S$. Since we want to be able to use existing static shortcut set constructions and algorithms, this is not an option in our case. Note that a shortcut set $S$ that achieves diameter $k$ in the expansion may still result in a base diameter smaller than $k$ in the resulting temporal graph $\resultTempG$ using our approach.

\subsection*{Size of the resulting TDSS}\label{subsection:expansionSizeResults}
We analyze how many edges are added to our final temporal graph $\resultTempG$ through the additition of $S$ and the translation as defined in \Cref{def:edgeTranslation}. Additionally, we translate the computational results from static shortcut sets into the temporal context using our approach. For edges added through $S$ there exist three cases:
\begin{enumerate}[noitemsep]
    \item The edge is translated into a new temporal edge, i.e., that was not part of the original graph $\tempG$,
    \item 1. is the case, but there exists another edge that translates to the same edge or
    \item the edge translates to an original edge of $\tempG$.
\end{enumerate}

For the size of the final shortcut set in $\resultTempG$ we only count the edges from case 1, since edges from case 2 are omitted in $\resultTempG$ and edges from case 3 already existed in the original temporal graph $\tempG$. We can obviously observe that the resulting temporal shortcut set is no larger than $S$.
\begin{restatable}[]{corollary}{translatedShortcutSetSize}
    \label{crl:translatedShortcutSetSize}
    Let $\tempG$ be a temporal graph and $\staticTrans(\tempG)$ its static expansion. After adding a static shortcut set $S$ to $\staticTrans(\tempG)$ and translating $\staticTrans(\tempG)$ into a temporal graph $\resultTempG$ (as defined in \Cref{def:edgeTranslation}) the number of added shortcuts in $\resultTempG$ compared to $\tempG$ is at most $|S|$.
\end{restatable}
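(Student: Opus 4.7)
The plan is to set up an explicit map from the static shortcut set $S$ to the temporal edges of $\resultTempG$ and then count. Since the corollary only wants an upper bound of $|S|$, the argument should be a direct application of \Cref{def:edgeTranslation} combined with \Cref{crl:expansionShortcuts}, without needing any further structural insight into $\tempG$.

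First, I would observe that every shortcut $s = (\alpha,\beta) \in S$ falls into case 2 of \Cref{def:edgeTranslation}, never case 1. Indeed, \Cref{crl:expansionShortcuts} explicitly rules out that the endpoints of a shortcut lie in the same gadget, so each $s$ does translate to a concrete temporal edge. This lets me define a map $\phi : S \to \tempE'$ by $\phi(s) = (u,v,t_u)$, where $u,v$ are the base vertices of $\alpha,\beta$ and $t_u$ is the time label of $\alpha$.

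Next, I would recall the classification of the three cases given right before the corollary: a shortcut contributes to the count of added edges in $\resultTempG$ only when it falls into case 1 there, that is, when its image under $\phi$ is a brand-new temporal edge not present in $\tempE$. Thus the number of shortcuts added to $\resultTempG$ relative to $\tempG$ is exactly $\lvert \phi(S) \setminus \tempE \rvert$.

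Finally, the bound follows from two trivial inequalities: $\lvert \phi(S) \setminus \tempE \rvert \le \lvert \phi(S) \rvert \le \lvert S \rvert$, where the first drops duplicates that already lie in $\tempE$ (case 3) and the second accounts for the fact that $\phi$ need not be injective, since distinct shortcuts can collapse to the same temporal edge (case 2, as illustrated in \Cref{fig:exampleTranslationToSameEdge}). I do not expect any serious obstacle here; the only thing to be careful about is making clear that $\phi$ is total on $S$, which is exactly what \Cref{crl:expansionShortcuts} guarantees, and that the non-injectivity and overlap with $\tempE$ can only decrease, not increase, the count.
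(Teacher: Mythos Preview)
Your proposal is correct and mirrors the paper's reasoning: the paper offers no formal proof for this corollary, merely stating after the three-case breakdown that ``we can obviously observe that the resulting temporal shortcut set is no larger than $S$.'' Your explicit map $\phi$ and the chain $\lvert \phi(S)\setminus\tempE\rvert \le \lvert\phi(S)\rvert \le \lvert S\rvert$ are precisely a formalization of that observation, and even if some $s\in S$ happened to fall into case~1 of \Cref{def:edgeTranslation} (so that $\phi$ were not total), such edges are discarded and the bound would only improve.
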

This implies that we can at worst use the computational results for static shortcut sets based on the size of $\staticTrans(\tempG)$ to compute results for temporal shortcut sets in $\tempG$. We can define the following bounds on the size of the temporal shortcut set in $\resultTempG$ based on the number of vertices $n\cdot t_{max}$\footnote{$n$ is the number of vertices in $\tempG$.} in $\staticTrans(\tempG)$.
\begin{restatable}[]{lemma}{temporalShortcutSetSize}
    \label{lm:temporalShortcutSetSize}
    Let $\tempG=(V,\tempE)$ be a temporal graph and let $t_{max}$ be the largest time label in $\tempE$. By adding $\bigOlog{n\cdot t_{max}}$ shortcuts to $\tempG$ we achieve a base diameter $d_\tempG(\resultTempG)$ of at most $\bigO{(n\cdot t_{max})^{1/3}}$ and at best $\bigOmegalog{(n\cdot t_{max})^{1/4}}$ in the resulting graph $\resultTempG$.
\end{restatable}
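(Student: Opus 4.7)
The plan is to chain three ingredients: the vertex count of $\staticTrans(\tempG)$, a known static shortcut set result applied to it, and the translation of \Cref{thm:shortcutSetsOnExpansion} together with the size bound of \Cref{crl:translatedShortcutSetSize}. First I would count the vertices of $\staticTrans(\tempG)$: by \Cref{def:staticExpansion}, every base vertex contributes two timed copies for each $t$ with $t_{min}\leq t\leq t_{max}+1$, so $\staticTrans(\tempG)$ has $N=2n(t_{max}-t_{min}+2)=\bigO{n\cdot t_{max}}$ vertices. This identification is the linchpin that turns any static bound phrased in $N$ into a temporal bound phrased in $n\cdot t_{max}$.

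For the upper bound, I would feed $\staticTrans(\tempG)$ into the randomized construction of \textcite{koganNewDiameterReducingShortcuts2022} (refining \textcite{ullmanHighprobabilityParallelTransitive1990}), which produces a static shortcut set $S$ on the expansion of size $\bigOlog{N}=\bigOlog{n\cdot t_{max}}$ such that the shortcutted expansion has diameter $k=\bigO{N^{1/3}}=\bigO{(n\cdot t_{max})^{1/3}}$. Applying \Cref{thm:shortcutSetsOnExpansion} then gives $d_\tempG(\resultTempG)\leq k$ after translating $S$ back via \Cref{def:edgeTranslation}, while \Cref{crl:translatedShortcutSetSize} bounds the number of temporal shortcuts actually added to $\tempG$ by $|S|=\bigOlog{n\cdot t_{max}}$. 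Combining these three facts yields the upper half of the statement.

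For the lower bound I would invoke the construction of \textcite{bodwinFolkloreSamplingOptimal2024}, which exhibits static graphs on $N$ vertices for which every shortcut set of size $\bigOlog{N}$ must leave diameter $\bigOmegalog{N^{1/4}}$. Since any $S$ we select on $\staticTrans(\tempG)$ is an ordinary static shortcut set on a static graph with $N=\bigO{n\cdot t_{max}}$ vertices, there exist temporal instances whose expansions realize this hard family, so no choice of $S$ within our framework can drive the diameter of the shortcutted expansion below $\bigOmegalog{(n\cdot t_{max})^{1/4}}$. Because \Cref{thm:shortcutSetsOnExpansion} only supplies $d_\tempG(\resultTempG)$ as an upper bound derived from the expansion's diameter, this is the best base diameter the translation pipeline can guarantee.

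The step I expect to be the main obstacle is phrasing the lower half honestly. The Bodwin--Hoppenworth bound is a worst-case statement about \emph{static} shortcut sets, so it constrains what our translation strategy can deliver rather than serving as an intrinsic lower bound for temporal diameter shortcut sets; nothing in the argument rules out a direct temporal construction bypassing $\staticTrans(\tempG)$ and beating $\bigOmegalog{(n\cdot t_{max})^{1/4}}$, and I would make sure the lemma's wording reflects that. A secondary technical point worth verifying is that the polylogarithmic slack absorbed into $\bigOlog{\cdot}$ and the additive constants hidden in $N=\bigO{n\cdot t_{max}}$ do not corrupt the exponents $1/3$ and $1/4$, which should be routine.
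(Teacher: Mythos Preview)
Your proposal is correct and follows essentially the same three-step chain as the paper's proof: count the vertices of $\staticTrans(\tempG)$, invoke the static $\bigOlog{N}$-shortcut results (upper bound via Kogan--Parter, lower bound via Bodwin--Hoppenworth), and pull the conclusion back through \Cref{thm:shortcutSetsOnExpansion} and \Cref{crl:translatedShortcutSetSize}. Your version is in fact more careful than the paper's on two points the paper glosses over: the exact vertex count of the expansion (the paper simply asserts $n\cdot t_{max}$), and the honest interpretation of the lower bound as a limitation of the translation pipeline rather than an intrinsic temporal barrier.
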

\begin{proof}[Proof of \Cref{lm:temporalShortcutSetSize}]
    Let $\tempG=(V,\tempE)$ be a temporal graph with $|V|=n$ and $\staticTrans(\tempG)$ its static expansion. By \Cref{def:staticExpansion}, we know that $\staticTrans(\tempG)$ has $n\cdot t_{max}$ vertices. Any static graph, including $\staticTrans(\tempG)$, with $x$ vertices can be shortcut with $\bigOlog{x}$ vertices to a diameter of at most $\bigO{x^{1/3}}$ and at best $\bigOmegalog{x^{1/4}}$ (see \Cref{section:staticRelatedWork}). Combined with \Cref{crl:translatedShortcutSetSize} and the size of $\staticTrans(\tempG)$ we achieve a base diameter $d_\tempG(\resultTempG)$ of at most $\bigO{(n\cdot t_{max})^{1/3}}$ and at best $\bigOmegalog{(n\cdot t_{max})^{1/4}}$ in the resulting graph $\resultTempG$ after adding $\bigOlog{n\cdot t_{max}}$ shortcuts to $\staticTrans(\tempG)$ and translating it back.
\end{proof}

\Cref{lm:temporalShortcutSetSize} applies the ratio found in $\Cref{crl:translatedShortcutSetSize}$ using current state of the art computational results for static shortcut sets. Using this ratio, improved results for static shortcut sets can also be applied to temporal shortcut sets in a similar way in the future.


    \makeatletter
        \def\toclevel@chapter{-1}
        \def\toclevel@section{0}
    \makeatother

    \chapter{Conclusions \& Outlook}
    \label{chapter:conclusions}
In this thesis we explored the concept of diameter shortcut sets in temporal graphs. Due to the nature of temporal graphs, the problem of temporal diameter shortcut sets is much more complex compared to its counterpart in static graphs. Specifically the additional restrictions for temporal journeys and the lack of transitivity in temporal reachability complicate the construction of temporal shortcut sets.
\begin{enumerate}
    \item We first explored how to define a Temporal Diameter Shortcut Set (TDSS) by looking at similar concepts in static graphs. We concluded that a reachability constraint on a temporal shortcut set is not feasible and defined the concept for temporal graphs accordingly.
    \item We analyzed the construction of temporal shortcut sets on temporal paths and temporal graphs with directed paths as their footprint. We proved that the usage of construction algorithms for static shortcut sets leads to an optimal construction for temporal shortcut sets as well.
    \item Finally, we presented a translation approach using a modified static expansion. We were able to translate static shortcut set constructions on the static expansion into valid results in the temporal context. This approach may be used for any directed temporal graph.
\end{enumerate}

Our results laid important groundwork for the further exploration of this field in temporal graphs. We made many fundamental observations which often restrict the possible approaches in comparison to solutions on regular static graphs. In the following we outline some open questions that were introduced during the writing of this thesis.
\section*{Open Questions \& Future Work}\label{section:futureWork}
We introduced a definition of temporal diameter shortcut sets that differs greatly from the existing static shortcut set. The decision to ignore newly added reachabilities was vital to enable the results we later presented. It may be possible though to construct temporal shortcut sets that are not less optimal than the ones we construct, without this constraint.

It may also be possible to expand the independence of temporal paths on underlying static paths, presented in \Cref{chapter:tdssPaths}, to more complex graphs like trees or DAGs. We proved that the reachabilities of the constructed static shortcut set are not expanded after adding time labels to the shortcuts, converting it into a temporal shortcut set. Of course the temporal shortcut sets created this way, enforce a certain diameter not just regarding the original reachabilities of the shortcutted temporal graph. If it were possible to expand this approach to more complex graphs, the restriction on our definition of temporal diameter shortcut sets may also not be necessary.

There are also some open questions regarding our translation approach using the modified static expansion. As mentioned earlier the addition of incoming and outgoing timed vertices may not be necessary for the construction to work. Reducing the construction to using only a single timed vertex per time $t\in\N$ would decrease the complexity of the static expansion graph and should therefore be explored further. 

We also only briefly explored the size of the resulting temporal shortcut set relative to the static shortcut set added to the expansion. It would be beneficial to explore the size ratio between the two further. This would enable us to apply future computational results for static shortcut sets to temporal shortcut sets more accurately. Of course additional research regarding the amount of temporal shortcuts needed to achieve a certain diameter independent of a given construction approach is needed.

Our translation approach inherently also restricts the possible shortcuts added. Due to the construction of the static expansion and the restriction of reachability in static shortcut sets no shortcuts with time labels outside of the lifecycle of the original temporal graph $\tempG$ can be added. The reachability constraint in the static expansion also does not allow us to add shortcuts with time labels later than the latest departure time. Through modifying the static expansion or the translation rules we presented, it may be possible to remove some of these restrictions, which may lead to improved results for the temporal shortcut set.

    \pagestyle{plain}

    \renewcommand*{\bibfont}{\small}
    \printbibheading
    \addcontentsline{toc}{chapter}{Bibliography}
    \printbibliography[heading = none]

    \addchap{Declaration of Authorship}
    I hereby declare that this thesis is my own unaided work. All direct or indirect sources used are acknowledged as references.\\[6 ex]

\begin{flushleft}
    Potsdam, \today
    \hspace*{2 em}
    \raisebox{-0.9\baselineskip}
    {
        \begin{tabular}{p{5 cm}}
            \hline
            \centering\footnotesize\printAuthor
        \end{tabular}
    }
\end{flushleft}

\end{document}